\begin{document}

% Use the \preprint command to place your local institutional report
% number in the upper righthand corner of the title page in preprint mode.
% Multiple \preprint commands are allowed.
% Use the 'preprintnumbers' class option to override journal defaults
% to display numbers if necessary
%\preprint{}

%Title of paper
%\title{Quantum Concatenated Codes with $\Omega( \sqrt{\mathcal{N}})$-Distance and Sublinear Decoders}
\title{Entanglement-assisted  concatenated quantum codes}

\author[1]{Jihao Fan}
\author[2]{Jun Li}
\author[1]{Yongbin Zhou}
\author[3]{Min-Hsiu Hsieh}
\author[4]{H. Vincent Poor}
\affil[1]{School of Cyber Science and   Engineering, Nanjing University of Science and Technology, Nanjing 210094, China}
\affil[2]{School of Electronic and Optical Engineering, Nanjing University of Science and Technology, Nanjing 210094, China}
\affil[3]{Hon-Hai Quantum Computing Research Center, Taipei, Taiwan}
\affil[4]{Department of Electrical and Computer
Engineering, Princeton University,   NJ 08544, USA}

\iffalse
\author{Jihao Fan}
\affiliation{%
School of Cyber Science and   Engineering, Nanjing University of Science and Technology, Nanjing 210094, China}
\author{Jun Li}
\affiliation{%
School of Electronic and Optical Engineering, Nanjing University of Science and Technology, Nanjing 210094, China}
\author{Yongbin Zhou}
\affiliation{%
School of Cyber Science and   Engineering, Nanjing University of Science and Technology, Nanjing 210094, China}
% \affiliation{%
 %Authors' institution and/or address
 %\textbackslash\textbackslash
%}
 %\altaffiliation[Also at ]{Physics Department, XYZ University.}%Lines break automatically or can be forced with \\
 \author{Min-Hsiu Hsieh}
\affiliation{%
Hon Hai Quantum Computing Research Center, Taipei City 114, Taiwan}
 \author{H. Vincent Poor}
\affiliation{Department of Electrical and Computer
Engineering, Princeton University,   NJ 08544, USA}
\date{\today}
\fi
%Collaboration name if desired (requires use of superscriptaddress
%option in \documentclass). \noaffiliation is required (may also be
%used with the \author command).
%\collaboration can be followed by \email, \homepage, \thanks as well.
%\collaboration{}
%\noaffiliation

%\date{\today}
%\begin{document}
\maketitle
\begin{abstract}
Entanglement-assisted   concatenated quantum codes (EACQCs), constructed
by concatenating two quantum codes,  are proposed. These EACQCs   show  several advantages over the standard
 concatenated  quantum codes (CQCs). Several families of EACQCs that, unlike standard CQCs,
  can beat  the nondegenerate  Hamming bound for  entanglement-assisted quantum  error correction   codes (EAQECCs) are derived. Further,  a number  of EACQCs with  better parameters
 than the best known standard   quantum error correction   codes  (QECCs) and EAQECCs are also derived.  In particular, several  catalytic EACQCs with better parameters than the  best known QECCs of the same length and net transmission are constructed.  Furthermore, each  catalytic EACQC consumes only one or two ebits. It is also shown   that EACQCs make  entanglement-assisted   quantum communication possible even if the ebits are noisy.  Finally, it is shown that   EACQCs  can outperform CQCs in   entanglement fidelity   over depolarizing channels  if the ebits are less noisy than the qubits.    Moreover, the threshold    error probability of EACQCs is larger  than that of CQCs when the error probability of ebits is sufficiently lower than that of qubits. Therefore EACQCs are not only competitive in quantum communication but also applicable in fault-tolerant quantum computation.
  %Finally, we provide several families of EACQCs with these properties.
%We also show that EACQCs with maximal entanglement can meet the   Gilbert-Varhamov bound asymptotically.
\end{abstract}

% insert suggested keywords - APS authors don't need to do this
%\keywords{}

%\maketitle must follow title, authors, abstract, and keywords
\maketitle

% body of paper here - Use proper section commands
% References should be done using the \cite, \ref, and \label commands
\newtheorem{definition}{Definition}
\newtheorem{theorem}{Theorem}
\newtheorem{result}{Result}
\newtheorem{lemma}{Lemma}
\newtheorem{corollary}{Corollary}
\newtheorem{example}{Example}
\newtheorem{proposition}{Proposition}
\newenvironment{proof}{{\noindent\it Proof:}}{\hfill $\square$\par}

\section{Introduction}
%\tableofcontents
%In open quantum systems,  the principal system  always interacts with the environment   and  \emph{decoherence} happens \cite{nielsen2000quantum}.
Quantum   error  correction codes (QECCs) are necessary
to realize quantum communications and to make fault-tolerant quantum computers \cite{deutsch2020harnessing,preskill2018quantum}. The \emph{stabilizer} formalism provides a useful way to construct QECCs from classical   codes,  but certain orthogonal constraints  are required  \cite{calderbank1998quantum}. The entanglement-assisted (EA)  quantum  error  correction code (EAQECC)     \cite{brun2006correcting,PhysRevA.76.062313,brun2014catalytic} generalizes the stabilizer code. By presharing some entangled states between the sender (Alice) and the receiver (Bob), EAQECCs can be constructed from any classical linear codes without the orthogonal constraint. Therefore the construction  could be greatly simplified. As an important physical resource, entanglement can boost the classical information capacity of quantum channels \cite{PhysRevLett.83.3081,PhysRevLett.126.250501,holevo2002on,hsieh2008entanglement,hsieh2010trading,hsieh2010entanglement}. Recently, it has been shown that EAQECCs can  violate the nondegenerate quantum Hamming bound \cite{li2014entanglement} or the quantum Singleton bound \cite{grassl2021entanglement}.

Compared to standard QECCs, EAQECCs must establish some amount of entanglement before   transmission. This preshared entanglement is the price to be paid for enhanced communication capability. In a sense, we need to consider the \emph{net}  transmission of EAQECCs, i.e., the number of qubits transmitted minus that of ebits preshared. Further, it is   difficult to preserve too many noiseless ebits in EAQECCs at present. Thus we have to use as few ebits as possible to conduct the communication, e.g., one or two ebits are preferable \cite{hsieh2011high,hsieh2009entanglement,fujiwara2013a,wilde2013entanglement}. In addition, EAQECCs with positive net transmission and little entanglement  can lead to catalytic quantum codes \cite{brun2006correcting,brun2014catalytic}, which are applicable to fault-tolerant
 quantum computation (FTQC). In Ref.~\cite{brun2006correcting},  a table of best known EAQECCs of length up to $10$ was established through computer search or algebraic
methods.  % In \cite{brun2014catalytic}, catalytic   quantum codes are considered to used in FTQC settings.
Several EAQECCs in Ref.~\cite{brun2006correcting} have larger minimum distances than the best known standard QECCs of the same length and net transmission. However, for   larger code lengths, the efficient construction of EAQECCs with better parameters than standard QECCs is still unknown.

In classical coding theory,    concatenated code (CC), originally proposed by Forney in 1960s \cite{forney1965concatenated}, provide a useful  way of constructing long codes from short ones.  CCs can  achieve very large coding gains   with  reasonable  encoding and decoding complexity \cite{lin2004error}. Moreover, CCs  can have  large minimum distances since  the distances of the component  codes are multiplied. As a result, CCs have been  widely used in many digital communication systems, e.g., the NASA standard for the Voyager program   \cite{costello2007channel}, and the compact disc (CD) \cite{lin2004error}.  Similarly in QECCs, the concatenated quantum codes  (CQCs), introduced by Knill and Laflamme in 1996 \cite{knill1996concatenated}, are  also effective  for constructing good
quantum codes. %In \cite{grassl2009generalized}, the generalized CQCs can be used to derive nonadditive quantum codes outperforming %any stabilizer codes.
Particularly, it has been shown that CQCs are of great importance in realizing FTQC \cite{gottesman1997stabilizer,aharonov1997fault,campbell2017roads}.

Moreover, there exists a specific phenomenon in QECCs, called   \emph{error degeneracy},
 which distinguishes quantum codes from classical ones in essence. It is widely believed that degenerate codes can  correct more quantum errors than nondegenerate ones. Indeed, there are some  open problems concerning whether degenerate  codes can   violate the  nondegenerate quantum Hamming bound      \cite{sarvepalli2010degenerate} or can improve the quantum channel capacity \cite{divincenzo1998quantum,renes2015efficient}. Many CQCs   have been shown  to be degenerate even if the component codes are nondegenerate, e.g.,  {Shor's $[[9,1,3]]$} code and the $[[25,1,9]]$   CQC \cite{gottesman1997stabilizer,Grassl:codetables}. If we introduce extra entanglement to CQCs, it is possible to improve the error degeneracy performance of CQCs.

  In this article, we generalize the idea of concatenation to EAQECCs,  and  propose    entanglement-assisted concatenated quantum codes (EACQCs).
We show that EACQCs can beat the nondegenerate quantum Hamming bound while standard CQCs cannot.
Several families of degenerate EACQCs that can surpass the nondegenerate   Hamming bound for EAQECCs, are constructed.
 The same conclusion could be reached for the asymmetric error models,  in which the phase-flip errors ($Z$-errors) happen more frequently than the bit-flip errors ($X$-errors) \cite{sarvepalli2009asymmetric,fan2021asymmetric}.
Furthermore, we derive a number of EACQCs with better parameters
than the  best known QECCs and EAQECCs.
In particular, we see that many EACQCs have  positive net transmission  and  each of them consumes only one or two ebits.  Thus  they give rise to   catalytic EACQCs  with little entanglement and better parameters than the best known QECCs.   Further,  we show that the  EACQC scheme makes EA quantum communication possible even if the ebits are noisy. We compute the  entanglement fidelity (EF) of the $[[15,1,9;10]]$ EACQC by using Bowen's $[[3,1,3;2]]$ EAQECC \cite{bowen2002entanglement} or the $[[3,1,3;2]]$ EA repetition code \cite{brun2006correcting,brun2014catalytic} as the inner code. The outer code is the standard $[[5,1,3]]$ stabilizer code. We show that the $[[15,1,9;10]]$ EACQC   performs much better than the $[[25,1,9]]$ CQC over depolarizing channels if the ebits  suffering lower    error rate  than the qubits.  Moreover, we compute the  error probability threshold of EACQCs and we show
   that  EACQCs have much higher thresholds than CQCs   when  the error rate of ebits is sufficiently lower than that of qubits.

\section{Preliminaries}
\label{Preliminaries}
Let $q=2^m$ ($m\geq1$ is an integer) and denote by $GF(q)$  the extension  field of the binary field $GF(2)$.   Let $\mathbb{C}$ be the field of complex numbers, and let $V_n=(\mathbb{C}^q)^{\otimes n}=\mathbb{C}^{q^n}$ be  the $q^n$-dimensional Hilbert space, where $n$ is a positive integer.
%Let $\ket{\psi}$ be a quantum state in $V_n$, and
Define two error operators on $\mathbb{C}^q$ by $X(a)|\psi\rangle=|a+\psi\rangle$   and $Z(b)|\psi\rangle=(-1)^{tr(
b\psi)}| \psi\rangle$,  where $a\in GF(q)$,  $b\in GF(q)$, and ``$tr$'' denotes  the trace operator from $GF(q)$ to $GF(2)$. For a vector $\mathbf{u}=(u_1,\cdots,u_n)\in GF(q)^n$, denote by $X(\mathbf{u})=X(u_1)\otimes\cdots\otimes X(u_n)$ and $Z(\mathbf{u})=Z(u_1)\otimes\cdots\otimes Z(u_n)$. Let $\Xi_n=\{X(\mathbf{a})Z(\mathbf{b})|\mathbf{a},\mathbf{b}\in GF(q)^n\}$
and let
$
\mathcal{G}_n=\{(-1)^u X(\mathbf{a})Z(\mathbf{b})|  \mathbf{a},\mathbf{b}\in GF(q)^n, u\in GF(2)\}
$ be the   group generated by $\Xi_n$. For the operator   $e=(-1)^uX(\mathbf{a})Z(\mathbf{b})\in \mathcal{G}_n$, the   weight of $e$ is defined by
$
\textrm{wt}_Q(e)=|\{1\leq i\leq n: (a_i,b_i)\neq(0,0)\}|.
$
The definition of quantum stabilizer codes is given below.
\begin{definition}
\label{definition of QEC and AQC}
A stabilizer code $ {Q}$ is a $q^k$-dimensional ($k\geq0$) subspace of $V_n$ such that
 $
 {Q}=\bigcap_{e\in T}\{|\phi\rangle\in V_n: e|\phi\rangle=|\phi\rangle\},
$
where $T$ is   a subgroup of $\mathcal{G}_n$.  $ {Q}=[[n,k,d]]_q$ has minimum distance $d$ if   it
can detect all  errors  $ e\in \mathcal{G}_n$ of  weight $\textrm{wt}_Q(e)$ up to $d-1$.
 Further, ${Q}$ is called nondegenerate if every    stabilizer in $T$  has   weight
    larger than or equal to   $d$, otherwise it is called degenerate.
\end{definition}

%The Calderbank-Shor-Steane (CSS) construction     \cite{calderbank1996good,steane1996simple} provide a usefluis a special
%family of quantum stabilizer codes and can be constructed  from two classical linear codes which satisfy some dual-containing relationship.
A concatenated quantum code (CQC) is derived from an inner code and an outer  code. In general,
 the component codes of CQCs can be chosen as stabilizer codes or non stabilizer codes. In this article, it suffices to   consider only
the case of stabilizer codes. Let the inner   and outer codes be $Q_I=[[n_1,k_1,d_1]]$ and $Q_O=[[n_2,k_2,d_2]]_{2^{k_1}}$, respectively.
Then we can derive a CQC   \cite{grassl2009generalized} with parameters
$
Q_C=[[ n_1n_2, k_2k_2,d_C\geq d_1d_2]].
$

An EAQECC with parameters $Q_e=[[n,k,d;c]]_q$ can encode  $k$ qudits into $n$ qudits by consuming   $c$ pairs
of maximally entangled states between Alice and Bob. It should be noted that  EAQECCs can be constructed from arbitrary classical linear codes directly.
The Calderbank-Shor-Steane (CSS) framework     \cite{steane1996error,calderbank1998quantum} provides a useful
way to construct both QECCs and EAQECCs from classical linear codes. %There exists the following construction of EAQECCs.

\begin{lemma}[\cite{brun2006correcting}]
\label{CSSEAQECCs}
  Denote by $C_1=[n,k_1,d_1]_q$ and $C_2=[n,k_2,d_2]_q$   two linear codes over $GF(q)$. There exists an
  EAQECC   with parameters
$
Q_e=[[n,k_1+k_2-n+c,d_e\geq\min\{d_1,d_2\};c]]_q,
$
 where $c=rank(H_1H_2^T)$.
\end{lemma}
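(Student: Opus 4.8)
The plan is to realize $Q_e$ through the entanglement‑assisted stabilizer (CSS) construction of Brun, Devetak and Hsieh~\cite{brun2006correcting}, tracking three quantities: the generating set, the number of ebits, and the minimum distance. First I would fix a full‑rank parity‑check matrix $H_1\in GF(q)^{(n-k_1)\times n}$ of $C_1$ and $H_2\in GF(q)^{(n-k_2)\times n}$ of $C_2$, and form the CSS‑type operators: $X(\mathbf{h})$ for every row $\mathbf{h}$ of $H_1$ together with $Z(\mathbf{g})$ for every row $\mathbf{g}$ of $H_2$. These generate a subgroup $S\le\mathcal{G}_n$. Since an $X$‑type row $(\mathbf{h}\mid\mathbf{0})$ and a $Z$‑type row $(\mathbf{0}\mid\mathbf{g})$ are linearly independent over $GF(q)$ in $GF(q)^{2n}$, and each of $H_1,H_2$ has independent rows, the $2n-k_1-k_2$ chosen generators are independent.

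Next I would read off the commutation structure: two $X$‑type generators always commute, two $Z$‑type generators always commute, and the commutation of $X(\mathbf{h}_i)$ with $Z(\mathbf{g}_j)$ is controlled by the $(i,j)$ entry of $H_1H_2^{T}$. Hence the symplectic Gram matrix of the generating set is block anti‑diagonal with off‑diagonal blocks $\pm H_1H_2^{T}$, so it has rank $2c$ with $c=\operatorname{rank}(H_1H_2^{T})$. Invoking the entanglement‑assisted canonical‑form theorem~\cite{brun2006correcting}, a Clifford (symplectic) transformation carries $S$ to a group generated by $2n-k_1-k_2-2c$ mutually commuting generators together with $c$ anticommuting ("symplectic") pairs $\{\bar{X}_i,\bar{Z}_i\}$. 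Pre‑sharing $c$ halves of maximally entangled pairs with Bob and appending them to the $\bar{X}_i,\bar{Z}_i$ turns the whole collection into a genuine abelian stabilizer group on $n+c$ qudits with $2n-k_1-k_2$ independent generators; the resulting EAQECC consumes $c$ ebits and encodes
\[
k=(n+c)-(2n-k_1-k_2)=k_1+k_2-n+c
\]
logical qudits, as claimed.

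For the distance I would show that every nontrivial Pauli error $E=X(\mathbf{a})Z(\mathbf{b})$ on Alice's $n$ qudits with $\textrm{wt}_Q(E)<\min\{d_1,d_2\}$ is detected. The syndrome of $E$ against the $Z$‑type generators is governed by $H_2\mathbf{a}^{T}$ and against the $X$‑type generators by $H_1\mathbf{b}^{T}$. Since $\textrm{wt}_Q(E)$ bounds both the Hamming weight of $\mathbf{a}$ and that of $\mathbf{b}$, the hypothesis forces $\textrm{wt}(\mathbf{a})<d_2=d(C_2)$ and $\textrm{wt}(\mathbf{b})<d_1=d(C_1)$; a nonzero vector of weight below $d(C_i)$ cannot lie in $\ker H_i=C_i$, so if $E\ne\pm I$ then $\mathbf{a}\ne\mathbf{0}$ or $\mathbf{b}\ne\mathbf{0}$ and the corresponding syndrome block is nonzero. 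Hence all errors of weight at most $\min\{d_1,d_2\}-1$ are detectable, i.e. $d_e\ge\min\{d_1,d_2\}$.

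The step I expect to be the main obstacle is the middle one: correctly invoking the symplectic canonical form for subgroups of the Pauli group and, in particular, confirming that the entanglement cost is exactly $\operatorname{rank}(H_1H_2^{T})$ — half the rank of the skew‑symmetric Gram matrix — rather than off by a factor of two, together with the $GF(q)$ bookkeeping when $q>2$. Once this is in place, the dimension count and the distance bound are routine linear algebra over $GF(q)$.
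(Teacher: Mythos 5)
The paper does not prove this lemma at all: it is quoted verbatim from Brun, Devetak and Hsieh~\cite{brun2006correcting} as a known building block, so there is no internal proof to compare against. Your argument is the standard one from that reference (CSS-type generators from the two parity-check matrices, symplectic Gram matrix of rank $2\,\mathrm{rank}(H_1H_2^{T})$ yielding $c$ anticommuting pairs absorbed by $c$ ebits, the dimension count $(n+c)-(2n-k_1-k_2)$, and the syndrome argument for the distance), and it is correct; the only points needing care are exactly the ones you flag, namely the canonical-form step and, for $q=2^m$, the fact that commutation is read through the trace form so the generators must be taken $GF(2)$-linearly over a basis of $GF(q)$, with $c$ then counted in e-dits.
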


EAQECCs can also be constructed  by using the Hermitian construction \cite{calderbank1998quantum,brun2006correcting,wilde2008optimal} as follows.
\begin{lemma}[\cite{brun2006correcting}]
\label{HermitianEAQECCs}
Let $C=[n,k,d]_{q^2}$ be a linear code over $GF(q^2)$.  There exists an  EAQECC  with parameters
$
Q_e =[[n,2k -n+c,d_e \geq d ;c]]_q,
$
 where $c=rank(HH^\dagger)$, and $H^\dagger$ is the  conjugate transpose of $H$ over  $GF(q^2)$.
 \end{lemma}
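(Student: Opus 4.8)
The plan is to deduce Lemma~\ref{HermitianEAQECCs} from the general entanglement-assisted stabilizer formalism of \cite{brun2006correcting,PhysRevA.76.062313,brun2014catalytic}, which converts an \emph{arbitrary} list of $GF(q)$-Pauli operators---not required to pairwise commute---into an EAQECC whose number of ebits equals one half of the $GF(q)$-rank of the symplectic Gram matrix of the associated $GF(q)$-vectors, and whose logical dimension is $q^{\,n+c-s}$ when $s$ independent generators are used. The first step is to recall the $GF(q)$-linear bijection $\phi\colon GF(q^2)^n\to GF(q)^{2n}$ that transports the trace-Hermitian structure on $GF(q^2)^n$ onto the standard symplectic structure on $GF(q)^{2n}$: the alternating $GF(q)$-bilinear form $B(\mathbf u,\mathbf v)=\mathrm{Tr}_{q^2/q}\!\bigl(\langle\mathbf u,\mathbf v\rangle_h\bigr)$, where $\mathrm{Tr}_{q^2/q}$ is the trace from $GF(q^2)$ to $GF(q)$ and $\langle\mathbf u,\mathbf v\rangle_h=\sum_i u_i v_i^{q}$, is nondegenerate and alternating---$B(\mathbf u,\mathbf u)=0$ because $\mathrm{Tr}_{q^2/q}$ kills $GF(q)\ni\langle\mathbf u,\mathbf u\rangle_h$ in characteristic $2$---hence isometric to the standard symplectic form, and it obeys the scaling rule $B(a\mathbf u,b\mathbf v)=\mathrm{Tr}_{q^2/q}\!\bigl(a\,b^{q}\,\langle\mathbf u,\mathbf v\rangle_h\bigr)$ for $a,b\in GF(q^2)$.

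Starting from a full-rank parity-check matrix $H$ of $C$, of size $(n-k)\times n$ over $GF(q^2)$, I would then form the $GF(q)$ check matrix $\widetilde H$ of size $2(n-k)\times 2n$ whose rows are $\phi(\mathbf h_i)$ and $\phi(\omega\mathbf h_i)$, where $\mathbf h_i$ runs over the rows of $H$ and $\{1,\omega\}$ is a fixed $GF(q)$-basis of $GF(q^2)$; these $2(n-k)$ rows are $GF(q)$-independent since $H$ has full $GF(q^2)$-rank, and their $GF(q)$-span equals $\phi$ of the $GF(q^2)$-row space of $H$. Feeding $\widetilde H$ into the entanglement-assisted construction yields an EAQECC on $n$ qudits of logical dimension $q^{\,n+c-2(n-k)}=q^{\,2k-n+c}$, where $c$ is the ebit count, i.e.\ one half of the $GF(q)$-rank of the symplectic Gram matrix $\mathcal M$ of the rows of $\widetilde H$. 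The computational heart is the identity $\mathrm{rank}_{GF(q)}\mathcal M=2\,\mathrm{rank}_{GF(q^2)}(HH^{\dagger})$: writing $A=HH^{\dagger}$, the scaling rule identifies $\mathcal M$---regarded as a $GF(q)$-bilinear form on $GF(q^2)^{\,n-k}$ viewed as a $2(n-k)$-dimensional $GF(q)$-space---with $(\mathbf x,\mathbf y)\mapsto\mathrm{Tr}_{q^2/q}\!\bigl(\mathbf x^{T}A\,\overline{\mathbf y}\bigr)$, whose radical is exactly the $GF(q^2)$-left kernel of $A$ regarded over $GF(q)$ (because $u\mapsto\mathrm{Tr}_{q^2/q}(wu)$ is a nonzero functional for every $w\neq0$), of $GF(q)$-dimension $2\bigl((n-k)-\mathrm{rank}\,A\bigr)$; the rank identity then follows, so $c=\tfrac12\,\mathrm{rank}_{GF(q)}\mathcal M=\mathrm{rank}_{GF(q^2)}(HH^{\dagger})$ and the logical dimension is $q^{\,2k-n+c}$. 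For the minimum distance, I would show that after symplectic Gram--Schmidt simplification---in which Bob's $c$ halves of the preshared Bell pairs are noiseless---any error on the $n$ channel qudits that is undetectable yet logically nontrivial must have $\phi$-preimage a nonzero vector of the Hermitian dual of the row space of $H$, that is, of $\overline C$; since conjugation preserves Hamming weight, $d(\overline C)=d(C)=d$, so such an error has weight at least $d$, giving $d_e\ge d$ (with a strict gain possible when the code turns out degenerate).

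The step I expect to need the most care---and hence the main obstacle---is making sure the entanglement-assisted bookkeeping really collapses to these clean formulas: that the ``doubling'' $\mathbf h_i\mapsto\{\phi(\mathbf h_i),\phi(\omega\mathbf h_i)\}$ both preserves $GF(q)$-independence and \emph{exactly} doubles the symplectic Gram rank, so that the ebit count is $\mathrm{rank}(HH^{\dagger})$ and not twice that, and that the distance bound is not eroded by the extra noiseless registers of the simplified stabilizer code (the familiar point that in the entanglement-assisted setting weight is counted only on the $n$ transmitted qudits, and that it is $\overline C$ rather than some proper subcode that governs it). Once these are settled, the dimension count and the distance bound run in close parallel with the proof of Lemma~\ref{CSSEAQECCs}.
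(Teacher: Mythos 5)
The paper does not prove this lemma at all---it is imported verbatim from \cite{brun2006correcting} (see also \cite{wilde2008optimal}), so there is no in-paper argument to compare against. Your reconstruction is the standard proof of the Hermitian construction for EAQECCs and is correct: the transport of the trace-Hermitian form to the symplectic form, the rank identity $\mathrm{rank}_{GF(q)}\mathcal{M}=2\,\mathrm{rank}_{GF(q^2)}(HH^{\dagger})$ obtained from the radical of $(\mathbf{x},\mathbf{y})\mapsto \mathrm{Tr}_{q^2/q}\bigl(\mathbf{x}^{T}HH^{\dagger}\overline{\mathbf{y}}\bigr)$, and the identification of the undetected errors with the conjugate code $\overline{C}=(C^{\perp})^{\perp_h}$ (which has the same weight distribution as $C$, giving $d_e\geq d$) are exactly the ingredients used in the cited sources, and each step you sketch goes through.
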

 
We organize the main results of our study in the following order. Firstly, we present the construction of the entanglement-assisted concatenated quantum codes from two component quantum codes. Secondly, we construct several families of EACQCs violating the nondegenerate Hamming bound for EAQECCs. Thirdly, we derive a number of EACQCs with better parameters than the best known QECCs and EAQECCs. At last, we show that EACQCs can correct errors in the ebits. It is shown that EACQCs can outperform CQCs in entanglement fidelity and have higher error probability thresholds than CQCs.

\section{Entanglement-Assisted Concatenated Quantum Codes}\label{EAQECCs}
We generalize   CQCs to EACQCs by concatenating two quantum codes which can be chosen as either standard QECCs or EAQECCs. In this article,   sometimes we  represent an $ [[n,k,d]]_q$ QECC as an $[[n,k,d;0]]_q$ EAQECC so that we can unify the representation of QECCs and EAQECCs.
Let the inner   code  be $Q_I=[[n_1,k_1,d_1;c_1]]$, which requires $c_1$ ebits.
 Denote by ${k}^*_1\equiv k_1-c_1$ the net transmission of $Q_I$.
%${k}^*_1>0$.
   %Denote   the  basis states of $Q_1$  by $|\psi_0\rangle,\ldots,|\psi_{2^{k_1}-1}\rangle$.
Let the outer code be $Q_{O}=[[n_2,k_2,d_2;c_2]]_{2^{k_1}}$, which can  either be binary or nonbinary  and
 that depends on ${k}_1$. $Q_O$ uses $c_2$ edits, or equivalently, $c_2{k}_1$ ebits. Denote by ${k}^*_2\equiv k_2-c_2$ the net transmission of $Q_O$.
  {Notice that, for classical linear codes  and quantum codes over the binary field $GF(2)$,  we usually neglect the   index  in the code parameters if there is no ambiguity.}

We have the following result about EACQCs.

\begin{theorem}\label{EACQCLemma}
Let  $Q_I=[[n_1,k_1,d_1;c_1]]$ be the inner code, and
let  $Q_{O}=[[n_2,k_2,d_2;c_2]]_{2^{{k}_1}}$ be the outer code.
 There exists an EACQC $\mathscr{Q}_e$ with parameters
 \begin{equation}
\mathscr{Q}_e=[[n_1n_2, k_1k_2,d_e\geq d_1d_2;c_e ]],
\end{equation}
where $c_e=c_1n_2+ c_2{k}_1$ is the number of ebits. The net transmission   is  $k_e^*= k_1k_2-c_e$.
\end{theorem}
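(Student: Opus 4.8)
The plan is to construct $\mathscr{Q}_e$ explicitly as a two-layer encoding isometry and then read off the four parameters $(n_1n_2,\ k_1k_2,\ d_e,\ c_e)$ one at a time, the distance being the only point that requires real work. First I would set up the encoder. View the outer code $Q_O$ as an encoder acting on $n_2$ registers of dimension $2^{k_1}$: it maps the $k_2$ logical registers, together with Alice's halves of $c_2$ maximally entangled pairs of Schmidt rank $2^{k_1}$, into the $n_2$ output registers. Each rank-$2^{k_1}$ pair is equivalent to $k_1$ qubit-ebits, so this layer consumes $c_2k_1$ ebits. Next, identify each $2^{k_1}$-dimensional output register with the logical space of one copy of the inner code $Q_I$ and apply the inner encoder to each of the $n_2$ registers; this maps $k_1$ qubits plus Alice's halves of $c_1$ ebits into $n_1$ physical qubits per block, so this layer consumes $c_1n_2$ ebits and produces $n_1n_2$ physical qubits. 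Composing the two layers gives an isometry from $(\mathbb{C}^2)^{\otimes k_1k_2}$ (tensored with Alice's ebit halves) into $(\mathbb{C}^2)^{\otimes n_1n_2}$, and $\mathscr{Q}_e$ is defined to be its image. This already yields $n=n_1n_2$, $k=k_1k_2$, $c_e=c_1n_2+c_2k_1$, and therefore $k_e^{\ast}=k_1k_2-c_e$. One should remark, exactly as for classical and ordinary quantum concatenation over non-binary alphabets, that the inner logical Pauli operators must be fixed as a consistent generalized Pauli basis on each $2^{k_1}$-register, so that the outer (EA)stabilizer structure lifts faithfully through the inner layer; cf.\ \cite{grassl2009generalized}.

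The bound $d_e\ge d_1d_2$ is the substantive claim, and I would prove it in the equivalent detection form: every Pauli error $E\in\mathcal{G}_{n_1n_2}$ with $\textrm{wt}_Q(E)\le d_1d_2-1$ is detectable by $\mathscr{Q}_e$. Partition the $n_1n_2$ qubits into the $n_2$ inner blocks, write $E=E_1\otimes\cdots\otimes E_{n_2}$, and put $B=\{\,j:\textrm{wt}_Q(E_j)\ge d_1\,\}$. Since $\sum_j\textrm{wt}_Q(E_j)=\textrm{wt}_Q(E)<d_1d_2$, we get $|B|\le d_2-1$. The (symplectic) stabilizer group of $\mathscr{Q}_e$ contains, for every block $j$, the checks of $Q_I$ acting on block $j$; measuring these inner syndromes on $E$, either some good block $j\notin B$ produces a nonzero inner syndrome — then $E$ anticommutes with a check of $\mathscr{Q}_e$ and is detected — or every $E_j$ with $j\notin B$ commutes with all inner checks, in which case, because $\textrm{wt}_Q(E_j)\le d_1-1<d_1$, the defining distance property of $Q_I$ forces $E_j$ into the inner stabilizer, so block $j$ induces the trivial operator on its outer register. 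In the surviving case the induced error at the outer level is supported on at most $|B|\le d_2-1$ of the $n_2$ outer registers. Since the stabilizer group of $\mathscr{Q}_e$ also contains the checks of $Q_O$ written through the inner logical operators, the same dichotomy applied to $Q_O$ shows that either $E$ anticommutes with an outer check of $\mathscr{Q}_e$ and is detected, or the outer-level error lies in the outer stabilizer and $E$ acts as the identity on the logical space of $\mathscr{Q}_e$. In every branch $E$ is detectable, which is precisely $d_e\ge d_1d_2$.

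The main obstacle I anticipate is not the combinatorics of the weight split but the bookkeeping of the entanglement-assisted stabilizer structure: one must check that the lifted inner checks together with the inner-logical encodings of the outer checks form a legitimate EAQECC stabilizer/gauge group, that its ebit-consuming (symplectic) rank is exactly $c_1n_2+c_2k_1$, and that ``zero syndrome $\Rightarrow$ in the stabilizer'' remains the correct detectability criterion once Bob's halves of the ebits are adjoined on both layers. I would handle this by passing to the extended $(n+c)$-qubit picture of Brun--Devetak--Hsieh \cite{brun2006correcting,brun2014catalytic}, in which an $[[n,k,d;c]]$ EAQECC becomes an ordinary stabilizer structure on Alice's and Bob's qubits; the purely stabilizer-code concatenation argument above then applies verbatim, the $c$'s add up blockwise by construction, and the ebit interpretation is restored at the end. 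The only additional subtlety worth spelling out explicitly is the fixing of a consistent generalized Pauli basis on each inner logical register, which is the exact quantum analogue of the choice made in generalized concatenated codes and is what guarantees that the outer distance $d_2$ is genuinely inherited.
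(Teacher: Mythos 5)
Your construction and parameter count follow exactly the paper's route: encode with the outer code using $c_2k_1$ preshared Bell pairs, split the result into $n_2$ registers of dimension $2^{k_1}$, encode each with the inner code using $c_1$ Bell pairs, and add up $c_e=c_1n_2+c_2k_1$. Where you genuinely diverge is on the distance claim. The paper disposes of $d_e\geq d_1d_2$ in one sentence, asserting it is ``similar to the principle of code concatenation'' and citing Forney, Knill--Laflamme, and Grassl et al.; it gives no argument at all. You instead supply the standard block-splitting detection proof: for $E=E_1\otimes\cdots\otimes E_{n_2}$ with $\textrm{wt}_Q(E)<d_1d_2$, at most $d_2-1$ blocks can carry weight $\geq d_1$, every light block is either caught by an inner check or (by the inner distance) lies in the inner stabilizer and acts trivially on its outer register, and the residual outer-level error of weight $\leq d_2-1$ is then handled by the outer code's detection property. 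You also flag and resolve the one issue the paper silently skips, namely that the entanglement-assisted stabilizer bookkeeping (which operators count as checks, what ``zero syndrome implies stabilizer'' means, where the $c$'s come from) should be done in the extended $(n+c)$-qubit picture of Brun--Devetak--Hsieh so that the ordinary stabilizer concatenation argument applies verbatim. So your write-up is a strictly more self-contained version of the same theorem: the paper buys brevity by outsourcing the distance bound to the concatenation literature, while you buy rigor at the cost of having to verify the lifted inner checks and inner-logical outer checks really form a legitimate EA stabilizer structure of the claimed symplectic rank --- a point you correctly identify as the only real bookkeeping obstacle. I see no gap in your argument.
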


 \begin{proof}
Based on the idea of code concatenation, we simply concatenate the inner code $Q_I$ with the outer code $Q_O$
to derive the EACQC \cite{forney1965concatenated,knill1996concatenated,grassl2009generalized}.
First we  encode the information state  $ |\mu\rangle$ by using the outer code $Q_O$, i.e.,
 \begin{equation}
 |\mu\rangle  \mapsto |\psi\rangle_O= (U_{O}\otimes \widehat{I}_{B_O})|\mu\rangle\otimes|0\rangle^{\otimes
(n_2-k_2-c_2) {k}_1}\otimes |\Psi_+\rangle_{AB}^{c_2{k}_1},
 \end{equation}
where there are  ${c_2{k}_1}$ Bell states, $|\Psi_+\rangle_{AB}^{c_2{k}_1}$, preshared between Alice and
Bob during the outer encoding. The outer encoding operation $U_O$ is applied to the qubits in Alice's side.
 Bob's halves of ebits are preshared and they do not need to be encoded.

Suppose that we can represent  $|\psi\rangle_O $ by
\begin{equation}
    |\psi\rangle_O =\sum_{\nu_1,\cdots,\nu_{n_2}=0}^{2^{k_1}}\ell_{\nu_1\cdots \nu_{n_2}}|\nu_1\cdots\nu_{n_2}\rangle,
\end{equation}
where $\ell_{\nu_1\cdots \nu_{n_2}} (0\leq\nu_1,\cdots,\nu_{n_2}\leq 2^{k_1} )$ should satisfy  the normalization condition. We separate each basis   $|\nu_1\cdots\nu_{n_2}\rangle$ in  $|\psi\rangle_O $ into $n_2$ subblocks, i.e., $|\nu_1\cdots\nu_{n_2}\rangle=|\nu_1\rangle\cdots|\nu_{n_2}\rangle$ for    $0\leq\nu_1,\cdots,\nu_{n_2}\leq 2^{k_1}$.
  For each subblock $| \nu_i\rangle (1\leq i\leq n_2)$,  we do the inner encoding as follows:
\begin{equation}
| \nu_i\rangle \mapsto |\psi_i\rangle_I= (U_{I}\otimes \widehat{I}_{B_I})| \nu_i\rangle\otimes|0\rangle^{\otimes n_1-k_1-c_1}\otimes |\Phi_+\rangle_{AB}^{c_1}.
 \end{equation}
  $|\Phi_+\rangle_{AB}^{c_1}$   are ${c_1 }$ Bell states preshared between Alice and
Bob during each inner   encoding. The inner encoding operation $U_I$ is applied to the qubits in Alice's side
while Bob's halves of ebits  do not need to be encoded.
It is easy to see that the    number of ebits used during the whole inner encoding is $c_1n_2$. The encoding process  of
EACQCs is given in Fig.~\ref{encodingEACQCs}.

The numbers of ebits  used during the outer and the inner encoding are equal to $c_2k_1$ and $c_1n_2$, respectively. Therefore the total number of ebits  is equal to $c_e=c_1n_2+ c_2{k}_1$. It is easy to see that the dimension of the EACQC  $\mathscr{Q}_e$  is equal to $2^{k_1k_2}$. Similar  to the principle of code concatenation in  \cite{forney1965concatenated,knill1996concatenated,grassl2009generalized}, the minimum distance of $\mathscr{Q}_e$ is at least $d_1d_2$. Therefore we can obtain an EAQECC with parameters
$
\mathscr{Q}_e=[[n_1n_2, k_1k_2,d_e\geq d_1d_2;c_e ]].
$
 \end{proof}
It is easy to see that if the inner and outer codes are both standard QECCs,  then the EACQC is a standard CQC.
Moreover, we can  use different inner codes in EACQCs.
 Let $Q_{I_i}=[[n_{I_i},k_{I_i},d_{I_i};c_{I_i}]]$ ($1\leq i\leq n_2$) be   $n_2$ inner codes.
   For simplicity,  we let $ {k}_1\equiv k_{I_1}=\ldots=k_{I_{n_2}}$, and
let $d_1\equiv d_{I_1}=\ldots=d_{I_{n_2}} $.  Let the outer code be $Q_{O}=[[n_2,k_2,d_2;c_2]]_{2^{{k}_1}}$. Then we can derive an EACQC  with parameters
 \begin{equation}
\mathscr{Q}_e'=[[\sum_{i=1}^{n_2}n_{I_i}, {k}_1k_2 ,d_e'\geq d_1d_2;c_e' ]] ,
\end{equation}
where $c_e'=\sum_{i=1}^{n_2}c_{I_i}+ c_2{k}_1$. The net transmission    is  $ k_1k_2-c_e'$.
%For a classical $C=[n,k,d]_2$ linear code, it means that $C$ can transmit $k$ bits %information and we usually say that the dimension is $k$. For a quantum code with %parameters $Q=[[n_Q,k_Q,d_Q]]_2$, it can transmit $k_Q$ qubits. However, the %dimension is $2^{k_Q}$. For an entanglement assisted quantum code with parameters %$Q_e=[[n_e,k_e,d_e;c]]_2$, it can transmit $k_e$ qubits.

 \begin{figure}
  \centering
  % Requires \usepackage{graphicx}
  \includegraphics[width=3.4in]{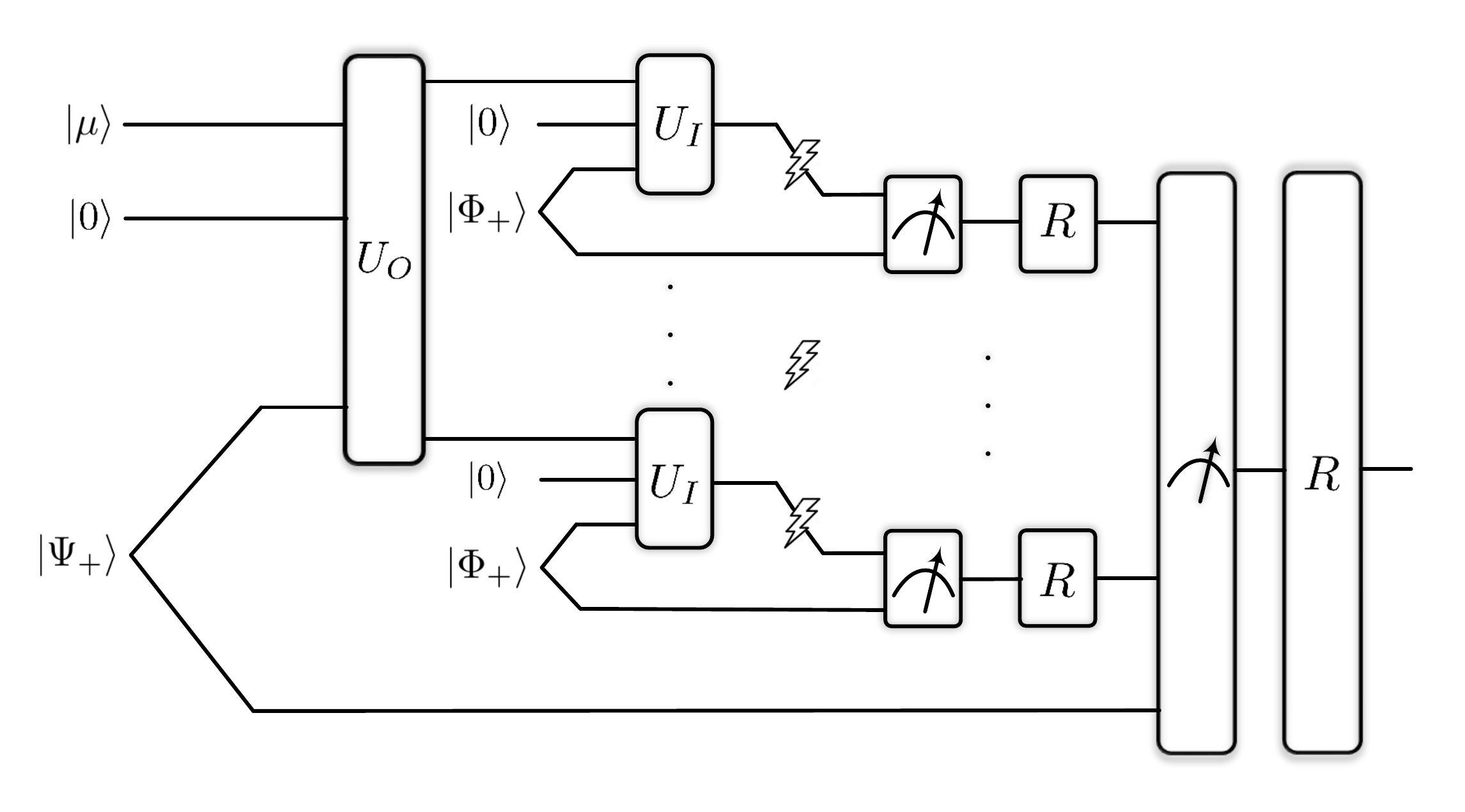}
  \caption{The encoding circuit of EACQCs. The information state $|\mu\rangle$ is first encoded with the outer encoder $U_O$ by presharing  $c_2{k}_1$  Bell states $|{\bf\Psi}_+\rangle =|\Psi_+\rangle_{AB}^{c_2{k}_1}$ between Alice and Bob. For    the output of $U_O$,  each subblock is encoded with the inner encoder $U_I$ by presharing  $c_1$  Bell states $|{\bf \Phi}_+\rangle =|\Phi_+\rangle_{AB}^{c_1}$ between Alice and Bob.}  \label{encodingEACQCs}
\end{figure}

\section{EACQCs Beating the Nondegenerate Quantum Hamming Bound}\label{ConstructionsofEACQCs}
Firstly, let us review the nondegenerate Hamming bound for EAQECCs \cite{lai2017linear}.
\begin{lemma}[\cite{lai2017linear}]
\label{EAQECCHammingBoundLemma}
For a binary nondegenerate  $Q_e=[[n,k,d;c]]$  EAQECC, it must satisfy
\begin{equation}
\label{EAQECCHammingBound}
 \sum_{i=0}^{\lfloor \frac{d-1}{2} \rfloor}3^i\binom{n}{i}\leq 2^{n+c-k}.
\end{equation}
Taking the limit as $n\rightarrow\infty$, this yields the asymptotic bound on the rate $k/n$:
\begin{equation}\label{asymptoticHamming}
\frac{k}{n}\leq 1+\frac{c}{n}-\frac{\delta}{2}\log_23-H_2\left(\frac{\delta}{2}\right),
\end{equation}
where $\delta=d/n$, and $H_2(x)=-x\log_2x-(1-x)\log_2(1-x)$ is the binary entropy function.
\end{lemma}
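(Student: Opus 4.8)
The plan is to establish the finite inequality \eqref{EAQECCHammingBound} by a quantum sphere-packing (volume) argument and then to deduce the asymptotic rate bound \eqref{asymptoticHamming} from it by an elementary entropy estimate. First I would fix the right ambient space: after Bob receives Alice's $n$ transmitted qubits, the relevant joint system consists of those $n$ qubits together with Bob's $c$ halves of the preshared ebits, so it lives in $(\mathbb{C}^2)^{\otimes(n+c)}$, and the code space $C$ of $Q_e$ is a subspace of dimension $2^k$ there, with the channel errors supported only on the first $n$ qubits (Bob's $c$ qubits are noiseless). Put $t=\lfloor(d-1)/2\rfloor$ and let $\mathcal{E}$ be the set of Pauli operators of weight at most $t$ on the $n$ channel qubits, each extended by the identity on Bob's $c$ qubits; counted up to an overall phase, $|\mathcal{E}|=\sum_{i=0}^{t}3^{i}\binom{n}{i}$.

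The crux is to show that nondegeneracy forces the subspaces $\{E\,C:E\in\mathcal{E}\}$ to be pairwise orthogonal. For $E_a,E_b\in\mathcal{E}$ that are distinct as Paulis (up to phase), $E_a^{\dagger}E_b$ is a nontrivial Pauli on the $n$ channel qubits of weight at most $2t\le d-1<d$, so by the minimum-distance hypothesis $Q_e$ detects it, i.e.\ $P_C\,E_a^{\dagger}E_b\,P_C=\lambda\,P_C$ for some scalar $\lambda$. If $\lambda\neq 0$, then $E_a^{\dagger}E_b$ would act as a nonzero scalar on all of $C$ and hence would lie (up to phase) in the group of Pauli operators fixing $C$ pointwise, all of whose nontrivial elements have weight $\ge d$ since $Q_e$ is nondegenerate; this would force $E_a^{\dagger}E_b=\pm I$ and contradict $E_a\neq E_b$. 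Therefore $\lambda=0$, i.e.\ $E_aC\perp E_bC$ --- precisely the nondegenerate Knill--Laflamme condition. Since each $E_aC$ has dimension $2^k$ and the $|\mathcal{E}|$ of them are mutually orthogonal inside a $2^{n+c}$-dimensional space, $|\mathcal{E}|\cdot 2^{k}\le 2^{n+c}$, which is \eqref{EAQECCHammingBound}.

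For the asymptotic statement I would take $\log_2$ of \eqref{EAQECCHammingBound}, divide by $n$, and send $n\to\infty$ with $\delta=d/n$ fixed. Writing $a_i=3^{i}\binom{n}{i}$, the ratio $a_{i+1}/a_i=3(n-i)/(i+1)$ exceeds $1$ for all $i\le t\sim\delta n/2$ (since $\delta\le 1<3/2$), so the sequence $a_0,\dots,a_t$ is increasing and $a_t\le\sum_{i=0}^{t}a_i\le(t+1)\,a_t$; hence $\tfrac1n\log_2\big(\sum_i a_i\big)=\tfrac1n\log_2 a_t+o(1)$, and by Stirling $\tfrac1n\log_2\binom{n}{\delta n/2}\to H_2(\delta/2)$, giving $\tfrac1n\log_2\big(\sum_i a_i\big)\to\tfrac{\delta}{2}\log_2 3+H_2(\delta/2)$. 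Since the right-hand side of \eqref{EAQECCHammingBound} contributes $\tfrac1n\log_2 2^{n+c-k}=1+\tfrac cn-\tfrac kn$, we get $1+\tfrac cn-\tfrac kn\ge\tfrac{\delta}{2}\log_2 3+H_2(\delta/2)$, which rearranges to \eqref{asymptoticHamming}. I expect the middle step --- pinning down the $(n+c)$-qubit ambient space and justifying, in the entanglement-assisted setting, that nondegeneracy makes the error-shifted code spaces genuinely orthogonal (equivalently, that the Knill--Laflamme coefficient matrix is the identity) --- to be the only real obstacle; once that is in place, the counting and the asymptotics are routine.
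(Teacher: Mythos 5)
Your proof is essentially correct, but note that the paper does not actually prove this lemma at all: it is imported verbatim from Lai and Ashikhmin \cite{lai2017linear}, where the Hamming bound arises inside a split-weight-enumerator / linear-programming framework (MacWilliams-type identities for the pair of enumerators attached to an EA stabilizer code). What you give instead is the direct sphere-packing argument, which is the more elementary and self-contained route: you correctly identify the ambient space as the $2^{n+c}$-dimensional system of the $n$ channel qubits together with Bob's $c$ noiseless halves, restrict the error set to Paulis supported on the $n$ channel qubits, and count mutually orthogonal translates $E\,C$ of the $2^k$-dimensional code space. The dimension count and the asymptotic step (largest-term domination of $\sum_i 3^i\binom{n}{i}$ plus Stirling) are both fine. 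The one place to tighten is the claim that $P_C E_a^{\dagger}E_b P_C=\lambda P_C$ with $\lambda\neq 0$ forces $E_a^{\dagger}E_b$ to act as a scalar on $C$: for a general operator this is false, and you need the Pauli/stabilizer dichotomy --- $E_a^{\dagger}E_b$ either anticommutes with some element of the (extended, $(n+c)$-qubit) stabilizer, giving $\lambda=0$ outright, or commutes with all of it, in which case it preserves $C$ and unitarity upgrades $P_C E_a^{\dagger}E_b P_C=\lambda P_C$ to scalar action with $|\lambda|=1$, putting $\bar\lambda E_a^{\dagger}E_b$ in the pointwise stabilizer and contradicting nondegeneracy. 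With that spelled out (and with nondegeneracy for EAQECCs understood as the isotropic subgroup having no nonidentity element of channel-qubit weight below $d$, a definition the paper never states explicitly), your argument is a complete and valid alternative to the cited derivation.
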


To the best of our knowledge, there do not exist CQCs that violate the quantum Hamming bounds  \cite{lai2017linear}. However, the situation changes significantly in the entanglement-assisted scenarios. We could easily construct several families of EACQCs that  violate the nondegenerate Hamming bound in Lemma~\ref{EAQECCHammingBoundLemma}. We summarize these EACQCs as follows.
\begin{theorem} There exist the following four families of EACQCs with parameters
\begin{itemize}
\item  [$(\rm{I})$] $\mathscr{Q}_{e_1} = [[5n_2,1,d_{e_1}\geq 3n_2;n_2-1 ]],$
where $n_2\geq3$ is odd.
\item  [$(\rm{II})$] $\widetilde{\mathscr{Q}}_{e_1} = [[5n_2,1,\widetilde{d}_{e_1}\geq 3n_2-3;n_2-1 ]],$
where $n_2\geq10 $ is even.
\item  [$(\rm{III})$] $\mathscr{Q}_{e_2} = [[4n_2,1,d_{e_2}\geq 3n_2;2n_2-1 ]],$
where $n_2\geq11$ is odd.
\item  [$(\rm{IV})$]$\widetilde{\mathscr{Q}}_{e_2} = [[4n_2,1,\widetilde{d}_{e_2}\geq 3n_2-3;2n_2-1 ]],$
where  $n_2\geq 32$ is even.
\end{itemize}
   EACQCs in $(\rm{I})-(\rm{IV})$  can beat the nondegenerate quantum Hamming bound for EAQECCs, respectively.
\end{theorem}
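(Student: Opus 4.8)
The plan is to realize each of the four families by a specific choice of inner and outer component codes, apply Theorem~\ref{EACQCLemma}, and then check that the resulting parameters violate inequality~\eqref{EAQECCHammingBound}. For families $(\mathrm I)$ and $(\mathrm{II})$ I would take the inner code to be the perfect code $Q_I=[[5,1,3]]=[[5,1,3;0]]$, and for $(\mathrm{III})$ and $(\mathrm{IV})$ an entanglement-assisted code $Q_I=[[4,1,3;1]]$, which exists (e.g.\ from the table of best EAQECCs in \cite{brun2006correcting}, or via Lemma~\ref{HermitianEAQECCs} applied to the quaternary MDS code $[4,2,3]_{4}$). Since $k_1=1$ throughout, the outer code is binary; I would use $Q_O=[[n_2,1,n_2;n_2-1]]$ when $n_2$ is odd and $Q_O=[[n_2,1,n_2-1;n_2-1]]$ when $n_2$ is even. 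Both outer codes follow from Lemma~\ref{CSSEAQECCs} with $C_1$ the $[n_2,1,n_2]$ repetition code and $C_2=\langle\mathbf v\rangle$ for a vector $\mathbf v$ of weight $n_2$ (resp.\ $n_2-1$): such a $\mathbf v$ has odd weight and hence lies outside the even-weight code $C_1^{\perp}$, so $C_1^{\perp}\cap C_2=\{0\}$ and $c=\operatorname{rank}(H_1H_2^{T})=\dim C_1^{\perp}-\dim(C_1^{\perp}\cap C_2)=n_2-1$, while $\min\{d_1,d_2\}$ equals $n_2$ (resp.\ $n_2-1$).

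Feeding these pairs into Theorem~\ref{EACQCLemma}, with $c_e=c_1n_2+c_2k_1$ and $d_e\ge d_1d_2$, reproduces exactly the four parameter sets: $\bigl([[5,1,3;0]],[[n_2,1,n_2;n_2-1]]\bigr)$ gives $(\mathrm I)$ with $c_e=n_2-1$; $\bigl([[5,1,3;0]],[[n_2,1,n_2-1;n_2-1]]\bigr)$ gives $(\mathrm{II})$ with $c_e=n_2-1$; $\bigl([[4,1,3;1]],[[n_2,1,n_2;n_2-1]]\bigr)$ gives $(\mathrm{III})$ with $c_e=2n_2-1$; and $\bigl([[4,1,3;1]],[[n_2,1,n_2-1;n_2-1]]\bigr)$ gives $(\mathrm{IV})$ with $c_e=2n_2-1$.

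It then remains to show that these parameters violate~\eqref{EAQECCHammingBound}, so that no nondegenerate EAQECC could have them. Fix a family, write $n$, $k=1$, $c$, $d$ for its parameters and set $t=\lfloor(d-1)/2\rfloor$. I would lower-bound the left side of~\eqref{EAQECCHammingBound} by its $i=t$ term and then use $\binom{n}{t}\ge 2^{\,nH_2(t/n)}/(n+1)$, reducing the claim to the scalar inequality $t\log_2 3+nH_2(t/n)-\log_2(n+1)>n+c-1$. Here $t=\tfrac{3}{2} n_2+O(1)$ in all four families, $n=5n_2$ for $(\mathrm I)$,$(\mathrm{II})$ and $n=4n_2$ for $(\mathrm{III})$,$(\mathrm{IV})$, and $n+c-1=6n_2-2$ throughout. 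Dividing by $n_2$ and letting $n_2\to\infty$, the left side tends to $\tfrac{3}{2}\log_2 3+5H_2(\tfrac{3}{10})\approx 6.78$ for $(\mathrm I)$,$(\mathrm{II})$ and to $\tfrac{3}{2}\log_2 3+4H_2(\tfrac{3}{8})\approx 6.2$ for $(\mathrm{III})$,$(\mathrm{IV})$, both strictly above the limit $6$ of the right side; this is the same as saying that the asymptotic bound~\eqref{asymptoticHamming} is violated, its right-hand side being negative while $k/n>0$. Hence the inequality holds for all sufficiently large $n_2$, and the finitely many remaining values are confirmed by direct evaluation of~\eqref{EAQECCHammingBound}.

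The main obstacle is this finite-$n_2$ bookkeeping: one must determine how large $n_2$ must be for the one-term estimate to overtake the right side and then verify the small cases explicitly, and since the gap between the two exponential rates is modest --- only about $0.2\,n_2$ in the exponent for $(\mathrm{III})$ and $(\mathrm{IV})$ --- this is precisely what dictates the thresholds $n_2\ge 3,\,10,\,11,\,32$. A secondary point needing care is confirming that all component codes genuinely exist with the listed parameters, in particular the $[[4,1,3;1]]$ inner code and the even-length outer code $[[n_2,1,n_2-1;n_2-1]]$. Finally, it is worth noting that each resulting $\mathscr Q_e$ is degenerate --- the inner blocks contribute stabilizers of weight at most $n_1\le 5$, far below $d_e$ --- which is why there is no conflict with Lemma~\ref{EAQECCHammingBoundLemma}, a bound that constrains only nondegenerate codes.
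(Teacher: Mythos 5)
Your proposal matches the paper's proof essentially step for step: the same component codes ($[[5,1,3;0]]$ or $[[4,1,3;1]]$ as inner code, $[[n_2,1,n_2;n_2-1]]$ or $[[n_2,1,n_2-1;n_2-1]]$ as outer code) fed into Theorem~\ref{EACQCLemma}, followed by lower-bounding the Hamming sum by its single term at $i=\lfloor(d-1)/2\rfloor$ and an entropy estimate of the binomial coefficient to exceed $2^{6n_2-2}$, with the small cases left to direct verification. The only differences are cosmetic --- the paper cites the outer codes from the literature rather than re-deriving them via the CSS lemma, and uses the $\binom{n}{i}\geq 2^{nH_2(\alpha)}/\sqrt{8n\alpha(1-\alpha)}$ form of the binomial estimate rather than $2^{nH_2(t/n)}/(n+1)$ --- so the argument is essentially identical.
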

\begin{proof}
The proof is given in Appendix \ref{App1}.
\end{proof}

We give an explicit example to illustrate the construction of EACQCs. Let $Q_I=[[5,1,3;0]]$ be the inner code  and let $Q_O=[[3,1,3;2]$ be the outer code in Ref.~\cite{brun2006correcting}. Then we can derive an EACQC with parameters $\mathscr{Q}_e=[[15,1,9;2]]$ by Theorem \ref{EACQCLemma}. This code can beat the nondegenerate Hamming bound for EAQECCs in Eq.~(\ref{EAQECCHammingBound}). Notice that $Q_I=[[5,1,3;0]]$ and $Q_O=[[3,1,3;2]$ are
both nondegenerate codes \cite{calderbank1998quantum,brun2006correcting}, while     $\mathscr{Q}_e=[[15,1,9;2]]$ is degenerate.  Also notice that $Q_I$ and $Q_O$ cannot beat the nondegenerate Hamming bound in Eq.~(\ref{EAQECCHammingBound}), but their EACQC $\mathscr{Q}_e=[[15,1,9;2]]$ can do so. If we encode one of  the   qubits of the outer encoding by using the $[[4,1,3;1]]$ EAQECC, then we   derive a $[[14,1,9;3]]$ EACQC.  This code can also beat the nondegenerate Hamming bound for EAQECCs.

For the asymmetric channel models, we present the construction of     EACQCs that can beat the nondegenerate Hamming bound for asymmetric EAQECCs.  Let $d_X$ and $d_Z$ be two positive integers. From \cite{galindo2020asymmetric}, an asymmetric EAQECC  $Q_A=[[n,k,d_Z/d_X;c]]_q$ can
detect any $X$-error    of   weight  up to  $d_X-1$   and any $Z$-error of weight up to  $d_Z-1$, simultaneously. The number of edits is $c$.  One could further obtain nondegenerate Hamming bounds for asymmetric EAQECCs \cite{galindo2020asymmetric,lai2017linear}.
\begin{lemma}[\cite{galindo2020asymmetric}]\label{AsymmetricEAQECCHammingBound}
A binary nondegenerate asymmetric EAQECC $[[n,k,d_Z/d_X;c]]$ must satisfy
\begin{equation}
 \sum_{i=0}^{\lfloor \frac{d_X-1}{2}\rfloor} \binom{n}{i} \sum_{j=0}^{\lfloor \frac{d_Z-1}{2}\rfloor} \binom{n}{j}\leq 2^{n+c-k}.
\end{equation}
\end{lemma}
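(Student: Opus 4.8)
The plan is to run the sphere--packing (Hamming) argument exactly as for Lemma~\ref{EAQECCHammingBoundLemma}, but to keep track of the $X$--weight and the $Z$--weight of an error separately. Set $t_X=\lfloor(d_X-1)/2\rfloor$ and $t_Z=\lfloor(d_Z-1)/2\rfloor$, and let $\mathcal{E}=\{\,X(\mathbf a)Z(\mathbf b): \mathbf a,\mathbf b\in GF(2)^n,\ \mathrm{wt}(\mathbf a)\le t_X,\ \mathrm{wt}(\mathbf b)\le t_Z\,\}$, viewed as a set of Pauli operators modulo phase. Because the alphabet is binary, such an operator is determined by the \emph{support} of $\mathbf a$ together with the \emph{support} of $\mathbf b$, and these two supports may be chosen independently; hence $|\mathcal{E}|=\big(\sum_{i=0}^{t_X}\binom{n}{i}\big)\big(\sum_{j=0}^{t_Z}\binom{n}{j}\big)$, which is precisely the left--hand side of the claimed inequality. (Overlapping supports produce $Y$--type entries, which are thereby already accounted for.)

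Next I would check that $\mathcal{E}$ is a correctable error set for an $[[n,k,d_Z/d_X;c]]$ code. For any $E_1=X(\mathbf a_1)Z(\mathbf b_1)$ and $E_2=X(\mathbf a_2)Z(\mathbf b_2)$ in $\mathcal{E}$, the operator $E_1^{\dagger}E_2$ equals, up to a phase, $X(\mathbf a_1+\mathbf a_2)Z(\mathbf b_1+\mathbf b_2)$, whose $X$--part has weight at most $2t_X\le d_X-1$ and whose $Z$--part has weight at most $2t_Z\le d_Z-1$. By the defining detection property of the asymmetric code --- it detects every $X(\mathbf a)Z(\mathbf b)$ with $\mathrm{wt}(\mathbf a)\le d_X-1$ and $\mathrm{wt}(\mathbf b)\le d_Z-1$ --- together with the entanglement--assisted Knill--Laflamme conditions of Brun--Devetak--Hsieh, $\mathcal{E}$ is correctable. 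Nondegeneracy of the code means exactly that no nontrivial element of its (simplified) stabilizer group has $X$--weight $<d_X$ and $Z$--weight $<d_Z$ simultaneously; consequently distinct elements of $\mathcal{E}$ have distinct syndromes, and when they act as $E\otimes \widehat{I}_{B}$ on the joint $(n+c)$--qubit register (Alice's $n$ qubits together with Bob's halves of the $c$ ebits) they map the $2^{k}$--dimensional code space onto \emph{mutually orthogonal} $2^{k}$--dimensional subspaces.

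Finally, these $|\mathcal{E}|$ pairwise--orthogonal $2^{k}$--dimensional subspaces all lie inside a Hilbert space of dimension $2^{\,n+c}$, so $|\mathcal{E}|\cdot 2^{k}\le 2^{\,n+c}$; rearranging and substituting $|\mathcal{E}|$ yields $\sum_{i=0}^{t_X}\binom{n}{i}\sum_{j=0}^{t_Z}\binom{n}{j}\le 2^{\,n+c-k}$, which is the assertion. I expect the only genuine obstacle to be the entanglement--assisted bookkeeping in the middle step: one must use the symplectic / stabilizer--pair description of EAQECC decoding to be certain that the correctable Paulis act as the identity on Bob's reference qubits and that ``nondegenerate'' is genuinely synonymous with ``the assignment $\mathcal{E}\to$ syndromes is injective,'' and one must read the asymmetric distance condition as a bound on the $X$-- and $Z$--weights \emph{separately}, since this is exactly what legitimizes the product form of $|\mathcal{E}|$. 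Everything else is the routine sphere--packing count.
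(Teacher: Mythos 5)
Your argument is correct: counting the pairs $(\mathbf a,\mathbf b)$ independently gives exactly the product of binomial sums, the product of two correctable errors is detectable by the separate $X$- and $Z$-weight bounds, and nondegeneracy forces the $|\mathcal E|$ images of the $2^k$-dimensional code space inside the $2^{n+c}$-dimensional joint register to be mutually orthogonal, yielding the stated packing inequality. Note that the paper itself gives no proof of this lemma --- it is imported verbatim from Galindo et al.\ --- and your sphere-packing derivation is the standard one used there, directly paralleling the symmetric entanglement-assisted Hamming bound of Lemma~\ref{EAQECCHammingBoundLemma}.
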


Let $Q_I=[[n_1,1,n_1/1;0]]$ be a binary asymmetric EAQECC which is used as the inner code. Let $Q_O=[[n_2,1,d_2/d_2;n_2-1]]$ be the outer code, where $d_2=n_2-1$ for even $n_2\geq 2$, or $d_2=n_2$ for odd $n_2\geq3$. We concatenate $Q_I$ with $Q_O$ according to Fig. \ref{encodingEACQCs}.   Then we    have the following result about asymmetric EACQCs.
\begin{corollary}\label{asymmetricEACQCs1}
There exists a family of asymmetric EACQCs with parameters
\begin{equation}
\mathscr{Q}_A=[[n_1n_2,1,n_1d_2/d_2;n_2-1]],
\end{equation}
where $n_1\geq 2$ is an integer, $d_2=n_2-1$ for even $n_2\geq2$, or $d_2=n_2 $ for odd $n_2\geq3$.
\end{corollary}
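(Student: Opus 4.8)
The plan is to exhibit the two component codes explicitly, invoke the concatenation construction of Theorem~\ref{EACQCLemma} for the length/dimension/ebit count, and then upgrade its distance bound $d\ge d_1d_2$ to the \emph{asymmetric} statement $d_Z\ge n_1d_2$, $d_X\ge d_2$.

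First I would fix the inner code: let $Q_I$ be the phase-flip repetition code on $n_1$ qubits, $|\overline 0\rangle=|+\rangle^{\otimes n_1}$, $|\overline 1\rangle=|-\rangle^{\otimes n_1}$, whose stabilizer is generated by the $X_iX_{i+1}$. This is a CSS code with logical operators $\overline X=X_1$ and $\overline Z=Z^{\otimes n_1}$, so every $Z$-error of weight $\le n_1-1$ is detected while no nontrivial $X$-error is; that is, $Q_I=[[n_1,1,n_1/1;0]]$. For the outer code I need $Q_O=[[n_2,1,d_2/d_2;n_2-1]]$ over $GF(2^{k_1})=GF(2)$, where $c=n_2-1=n_2-k$ is the \emph{maximal} amount of entanglement. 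I would build it from the asymmetric entanglement-assisted CSS construction (the asymmetric refinement of Lemma~\ref{CSSEAQECCs} in \cite{galindo2020asymmetric}) applied to a single classical code $C_1=C_2=C$: the quantum dimension $k_1+k_2-n_2+c=2-n_2+c$ equals $1$ exactly when $c=n_2-1$, and $c=\mathrm{rank}(HH^{T})=(n_2-1)-\dim(C\cap C^{\perp})$ attains $n_2-1$ precisely when $C$ is LCD ($C\cap C^{\perp}=\{0\}$). For odd $n_2$ take $C=\langle(1,\dots,1)\rangle$, a $[n_2,1,n_2]$ code with $\langle\mathbf 1,\mathbf 1\rangle=n_2=1$ in $GF(2)$; for even $n_2$ take $C=\langle(1,\dots,1,0)\rangle$, a $[n_2,1,n_2-1]$ code whose generator has odd weight $n_2-1$. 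Both are LCD, so in each case I get $Q_O=[[n_2,1,d_2/d_2;n_2-1]]$ with $d_2=n_2$ for odd $n_2$ and $d_2=n_2-1$ for even $n_2$.

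Next I would concatenate $Q_I$ with $Q_O$ as in Fig.~\ref{encodingEACQCs} and apply Theorem~\ref{EACQCLemma} with $(n_1,k_1,c_1)=(n_1,1,0)$ and $(n_2,k_2,c_2)=(n_2,1,n_2-1)$: the length is $n_1n_2$, the dimension is $k_1k_2=1$, and the number of ebits is $c_e=c_1n_2+c_2k_1=0+(n_2-1)=n_2-1$. For the distances I would prove the asymmetric analogue of the distance-multiplication lemma. Since both component codes are CSS, an $X$-type physical error never induces a $Z$-type error on the outer code, and conversely, so $d_X$ and $d_Z$ can be treated separately. Let $E=E_1\otimes\cdots\otimes E_{n_2}$ be an undetectable, logically nontrivial $Z$-type error of the concatenated code. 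Undetectability forces each $E_i$ to be one of the only undetectable $Z$-type normalizer elements of $Q_I$, namely $I$ or $\overline Z=Z^{\otimes n_1}$, so every inner block that contributes a nontrivial $\overline Z$ to its outer coordinate has $Z$-weight $\ge n_1=d_{Z,I}$; moreover the induced outer error is undetectable and nontrivial for $Q_O$, hence supported on $\ge d_{Z,O}=d_2$ coordinates, giving $\mathrm{wt}_Z(E)\ge n_1d_2$. Repeating with $X$ in place of $Z$ (now $d_{X,I}=1$, since any odd-weight $X$-string on an inner block equals $\overline X$, and $d_{X,O}=d_2$) gives $\mathrm{wt}_X\ge d_2$ for any undetectable nontrivial $X$-type error. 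Hence $\mathscr Q_A=[[n_1n_2,1,n_1d_2/d_2;n_2-1]]$.

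I expect the main obstacle to be making this last step fully rigorous in the asymmetric setting: one must check that "low-weight inner error is harmless" genuinely uses \emph{detection} rather than correction, that CSS-ness really decouples $X$ from $Z$ all the way through the concatenated stabilizer so the two asymmetric distances multiply independently, and that the induced outer error is nontrivial whenever $E$ is. A secondary point needing care is pinning down $c$ for $Q_O$ exactly: only the LCD property guarantees $\mathrm{rank}(HH^{T})=n_2-1$, and for even $n_2$ this is precisely why the outer (and hence the EACQC) distance must drop from $n_2$ to $n_2-1$.
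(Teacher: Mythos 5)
Your proposal is correct and follows the paper's own route: the paper likewise obtains Corollary~1 by concatenating the $[[n_1,1,n_1/1;0]]$ phase-repetition inner code with the maximal-entanglement $[[n_2,1,d_2/d_2;n_2-1]]$ outer code and invoking the concatenation parameters of Theorem~1 together with (implicit) asymmetric distance multiplication. The paper states this without further detail, so your explicit construction of the outer code via the LCD/rank computation and your block-by-block argument for $d_Z\ge n_1d_2$, $d_X\ge d_2$ simply supply the verification the paper leaves to the reader.
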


For any integer $n_1\geq2$ and any odd $n_2\geq 3$, $\mathscr{Q}_A$ in Corollary \ref{asymmetricEACQCs1} can beat the nondegenerate Hamming bound for asymmetric EAQECCs in Lemma \ref{AsymmetricEAQECCHammingBound}. For any integer $n_1\geq 2$ and any even $n_2 \geq 8$, $\mathscr{Q}_A$ can also beat the nondegenerate Hamming bound. Let $n_1=2$ and $n_2=3$. We can derive an asymmetric EACQC with parameters $\mathscr{Q}_A=[[6,1,6/3;2]]$. It is the  shortest length   EAQECC that can beat the nondegenerate Hamming bound known to date.
%On the  other hand, it is widely believed  that   degenerate codes can beat any  nondegenerate codes in code parameters.

\section{EACQCs beating existing QECCs and EAQECCs}\label{ConstructionsofEACQCs}

  Similar to classical coding theory, constructing quantum codes with parameters better than the best known results is one central topic in quantum coding theory. It is even more attractive  since degenerate quantum codes have large potentials to outperform  any nondegenerate quantum code. Indeed, a number of best known QECCs in \cite{Grassl:codetables} have been shown to be degenerate. %However,  very few EAQECCs with positive net transmission  are shown to be %degenerate \cite{lai2017linear}.

As argued in Ref.~\cite{brun2006correcting}, we say that
an EAQECC $[[n,k_1,d;c]]$ is better than a QECC $[[n,k,d]]$ if the net transmission  $(k_1-c)$ is larger than $k$. Ref.~\cite{Grassl:codetables} collects a list of classical linear codes and QECCs with best parameters currently known. According to the construction of EAQECCs in Lemma \ref{HermitianEAQECCs},  a  quaternary code in  Ref.~\cite{Grassl:codetables}    corresponds to a best known nondegenearate EAQECCs. In general, it is not difficult to construct nondegenearate EAQECCs with postive net transmissions better than the best known QECCs based on Ref.~\cite{Grassl:codetables}. Therefore we should focus on constructing  (degenerate) EAQECCs with positive net transmissions that can beat  the best known nondegenerate EAQECCs. This is an important topic   concerning that whether degeneracy can help to improve the classical coding limit in EAQECCs.

 We give two explicit  constructions to show that   EACQCs can beat the best known QECCs and     EAQECCs.  According to \cite{macwilliams1981theory,seroussi1986on},    there exists a cyclic maximum-distance-separable (MDS) code with parameters $[17,9,9]_{16} $. From Lemma \ref{HermitianEAQECCs}, we can derive an entanglement-assisted quantum maximum-distance-separable (EAQMDS) code with parameters $[[17,5,9;4]]_4$. Let $Q_I=[[4,2,2]]$ be the inner code and let $Q_O=[[17,5,9;4]]_4$ be the outer code. Then we can derive an EACQC with parameters $\mathscr{Q}_e=[[68,10,18;8]]$.  Compared with the best known  $Q=[[68,2,16]]$ QECC in   \cite{Grassl:codetables},   the EACQC $\mathscr{Q}_e$ has  a larger minimum distance
while maintaining the same length and net transmission. $\mathscr{Q}_e$ also has a larger minimum distance than the best known nondegenerate $ [[68,10,16;8]]$ EAQECC  from \cite{Grassl:codetables} of the same length and net transmission.
Let
$Q_O=[[65,17,33;16]]_8$ be an EAQMDS code constructed from a cyclic MDS code $[65,33,33]$ in \cite{macwilliams1981theory} and let $Q_I=[[8,3,3;0]]$. Then we can derive an EACQC with parameters $\mathscr{Q}_e=[[520,51,99;48]]$ by using $Q_O=[[65,17,33;16]]_8$ and $Q_I=[[8,3,3;0]]$ as the outer and inner codes, respectively. This EACQC is better than the asymptotic Gilbert-Varshamov (GV) bound for EAQECCs in Ref.~\cite{lai2017linear}.   In  Appendix \ref{App2}, Table \ref{tables1} and Table \ref{tables2}, we list more constructions of EACQCs with parameters better than the best known QECCs and EAQECCs.

 In practice, we prefer to use as few as possible ebits to do the entanglement-assisted
communication since  storing  a large number of
noiseless ebits is quite difficult.
Let $Q_I=[[5,1,3;0]]$ be the inner code and let $Q_O=[[3,2,2;1]]$  be the outer code,
 then we can derive a  $[[15,2,6;1]]$ EACQC.
 This code has   larger minimum distance than the best known standard $[[15,1,5]]$ QECC   in \cite{Grassl:codetables}.
 By using the MAGMA software \cite{cannon2006handbook}, we know that there exists  a nondegenerate $[[15, 8,6;7]]$ EAQECC. This code has the same minimum distance and net transmission with the $[[15,2,6;1]]$ EACQC. However, the EACQC consumes only one ebit and thus it is more practical. In  Appendix \ref{App2}, Table  \ref{tables3}, we list a number of EACQCs with   parameters better than the best known QECCs and EAQECCs, and each EACQC consumes only one ebit.

 In \cite{fan2016constructions}, sevral families of       $q$-ary  EAQMDS  codes with distances larger than $q+1$ and consuming
  very few   edits were constructed. We use EAQMDS codes in  \cite{fan2016constructions} as the outer code to construct EACQCs that consume very few ebits. We give an example to illustrate the construction.  Let $Q_I=[[4,2,2;0]]$ be the inner code  and let a $Q_O=[[17,4,8;1]]_{4}$   EAQMDS code in  \cite{fan2016constructions} be the outer code. Then we can derive an EACQC with parameters $\mathscr{Q}_e=[[68,8,16;2]]$.
  This code has a  larger minimum distance  than the best known    $[[68,6,14]]$  QECC  in    \cite{Grassl:codetables} of the same length and net transmission. It also has a larger minimum distance than the best known  nondegenerate $[[68,19,15;13]]$ EAQECC in   \cite{Grassl:codetables} of the same length and net transmission. In   Appendix \ref{App2}, Table  \ref{tables4}, we list a number of EACQCs   with better parameters than    the best known      QECCs and EAQECCs  in \cite{Grassl:codetables}, and
  each code  consumes  only a few
  ebits.

  \section{Performance of EACQCs with Noisy Ebits}
\label{ChannelFidelity}
 In this section, we evaluate the performance of EACQCs with noisy  ebits.  We compute the entanglement fidelity  and the error probability threshold of EACQCs and make comparisons  with the standard CQCs. For a quantum channel, the use of a QECC should improve the entanglement fidelity  when the error probability  is below  a specific   value,
 that we call it  \emph{the threshold}. In practical applications,  QECCs with sufficiently high thresholds are needed. We will show that EACQCs can outperform CQCs in entanglement fidelity if the ebits are less noisy than the qubits. Further, we will show that the threshold of EACQCs is much higher  than that of   CQCs  when the error probability of ebits is sufficiently  lower than that of
qubits.

During the process of entanglement-assisted quantum communication, the preshared  ebits of Bob need to be stored faultlessly and EAQECCs can only correct errors on the transmitted qubits. However, noise in Bob's ebits may be inevitable in practical applications \cite{lai2012entanglement,brun2014catalytic}, and maintaining a large number of noiseless  ebits is extremely difficult.
\iffalse In \cite{lai2012entanglement}, two methods are given to deal with   imperfect ebits.  First,    if an EAQECC is equivalent to  a standard stabilizer code, the  EAQECC can correct errors on both physical qubits and ebits. For example, Bowen's $[[3,1,3;2]]$ EAQECC is equivalent to the $[[5,1,3]]$ stabilizer code. Second,  Bob's ebits can be protected by an additional stabilizer code that is used to correct the ebit errors.
\fi In this article, we use EACQCs to correct   errors in ebits.
 In the EACQC scheme, suppose that we use an EAQECC $Q_I$ as the inner code  and use a standard stabilizer code $Q_O$  as the outer code. We show that the outer code  $Q_O$  can not only correct errors on the physical qubits but also can correct errors on the ebits. We construct two EACQCs and show that   they can outperform CQCs in entanglement fidelity when the error probability of ebits is lower than that of qubits. We construct a $\mathscr{Q}_B=[[15,1,9;10]]^B$ EACQC by using the  $ [[5,1,3]]$ stabilizer code as the outer code and Bowen's $ [[3,1,3;2]]$  EAQECC \cite{bowen2002entanglement} as the inner code.  Alternately, we    use  the $[[5,1,3]]$ stabilizer code as the outer code, and the $[[3,1,3;2]]$ EA repetition code as the inner code to construct another  $\mathscr{Q}_R=[[15,1,9;10]]^R$ EACQC  with the same parameters. Recall that the standard $Q_C=[[25,1,9]]$ CQC  is the concatenation of the $[[5,1,3]]$ stabilizer code.
It is known that Bowen's $ [[3,1,3;2]]$ EAQECC is equivalent to the $ [[5,1,3]]$ stabilizer code and they have the same stabilizers. Thus the $\mathscr{Q}_B=[[15,1,9;10]]^B$ EACQC is equivalent to the $Q_C=[[25,1,9]]$ CQC. Then the $\mathscr{Q}_B=[[15,1,9;10]]^B$ EACQC has the same error correction ability with the $Q_C=[[25,1,9]]$ CQC. Nevertheless,
we show that EACQCs can outperform CQCs in entanglement fidelity if the error probability of ebits is lower than that of qubits.

\begin{figure*}[ht]
%  \centering
\hspace{-10pt}
\begin{minipage}{0.46\linewidth}
%\vspace{3pt}
 \centering
  { \includegraphics[width=3.0in]{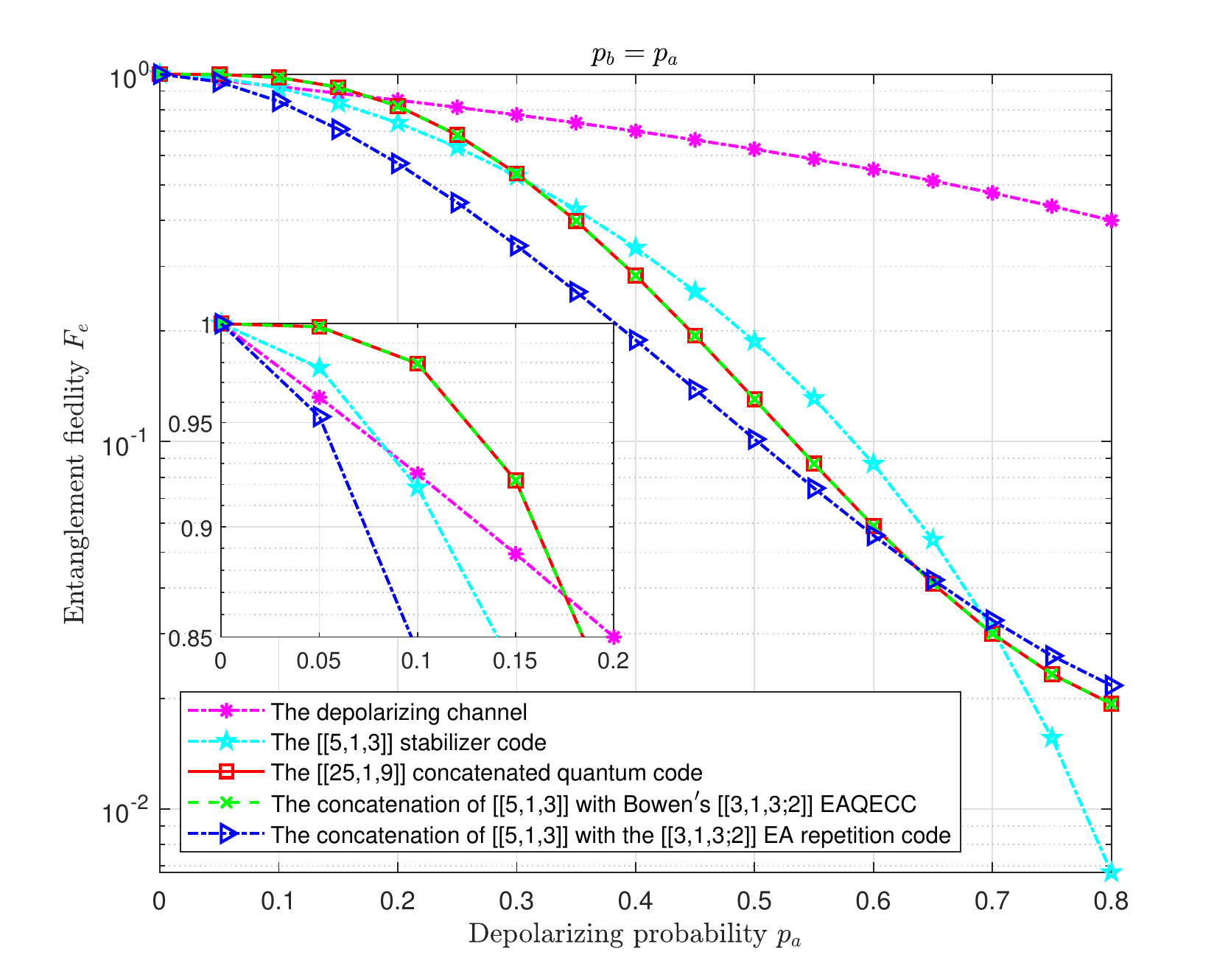}}
\end{minipage}
\begin{minipage}{0.46\linewidth}
\centering
%\vspace{3pt}
  { \includegraphics[width=3.0in]{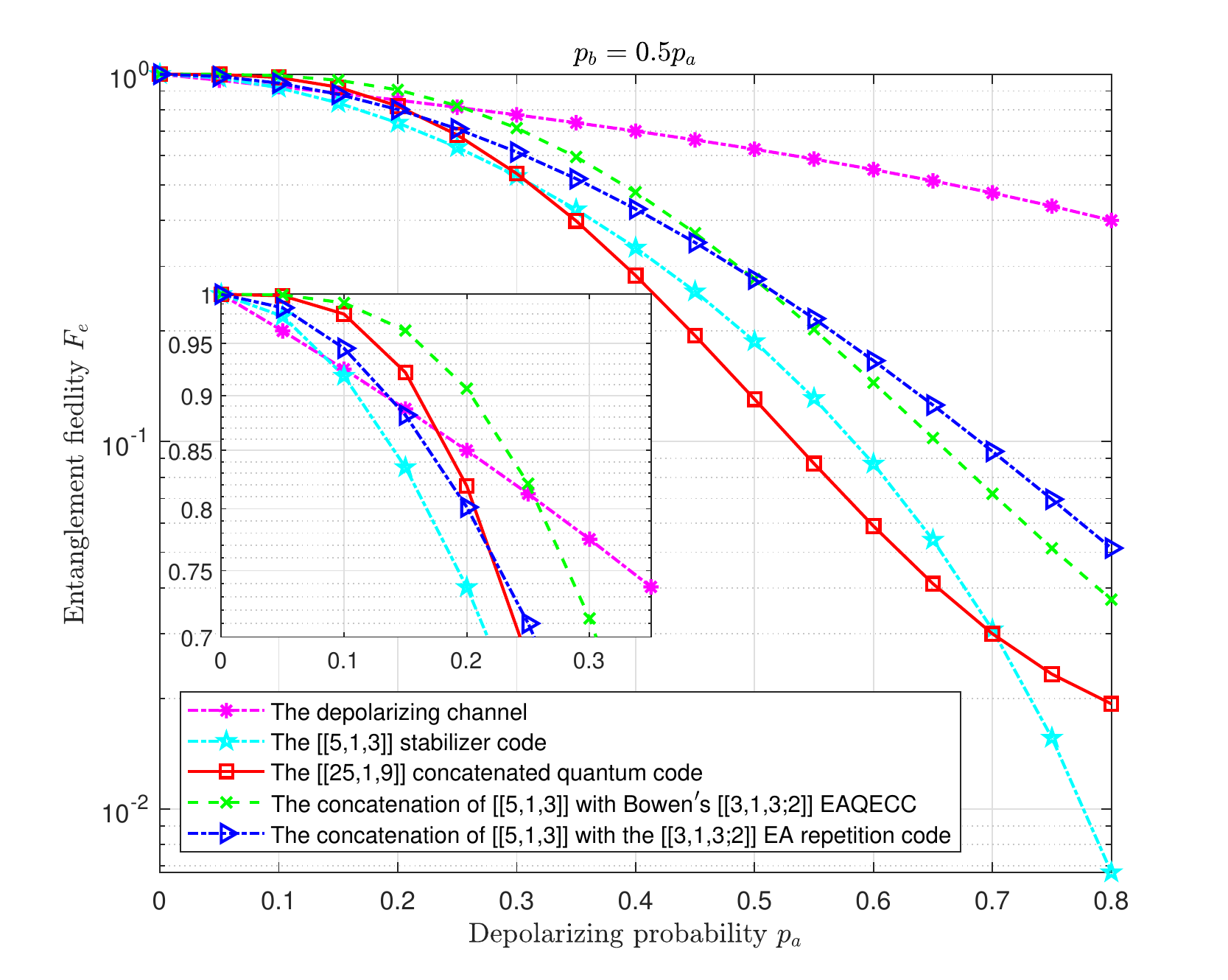}}
%\centerline{Image}
\end{minipage}

\vspace{3pt}
\hspace{-10pt}
\begin{minipage}{0.46\linewidth}
%\vspace{3pt}
 \centering
  { \includegraphics[width=3.0in]{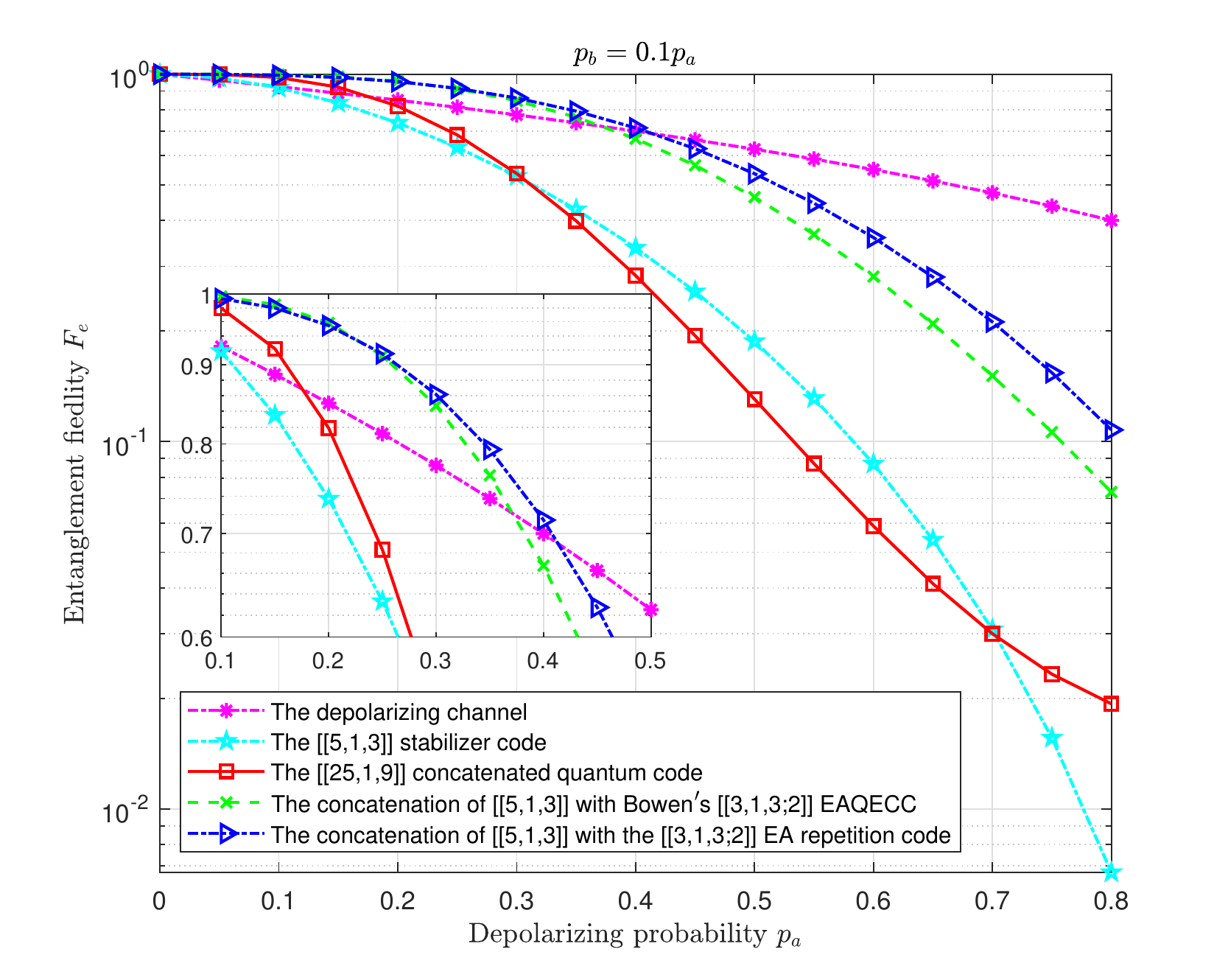}}
\end{minipage}
\begin{minipage}{0.46\linewidth}
%\vspace{3pt}
 \centering
  { \includegraphics[width=3.0in]{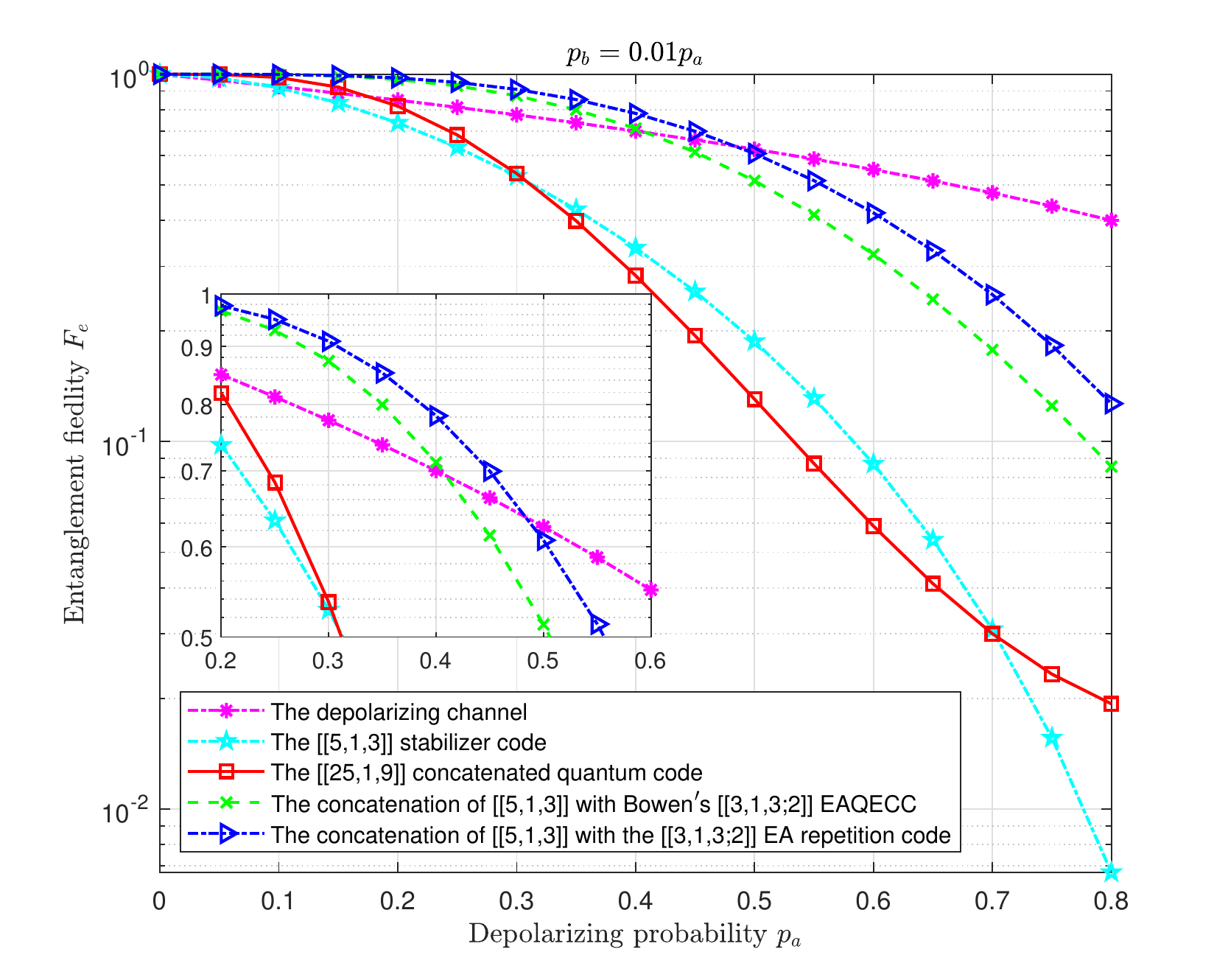}}
\end{minipage}
  \caption{Entanglement fidelity of EACQCs and CQCs for $p_b=p_a,0.5p_a,0.1p_a,0.01p_a$.   }\label{FC1}
\end{figure*}

The detailed entanglement fideltiy computation of the   two EACQCs and the CQC  was put  in   Appendix \ref{App3}.   The EFs of the two EACQCs and the CQC   were plotted   in Fig. \ref{FC1}.  We compare the EF of EACQCs with that of the $[[25,1,9]]$ CQC. If $p_a=p_b$, the EF of the  $\mathscr{Q}_B=[[15,1,9;10]]^B$ EACQC  is equal to that of the $[[25,1,9]]$ CQC. When $p_b=0.5p_a$, the EF of the
$\mathscr{Q}_B=[[15,1,9;10]]^B$ and the $\mathscr{Q}_R=[[15,1,9;10]]^R$ EACQCs can outperform      that of the $[[25,1,9]]$ CQC. As $p_b$ becomes even lower than $p_a$, e.g., $p_b=0.1p_a,0.01p_a$, the EF  of   $\mathscr{Q}_B=[[15,1,9;10]]^B$ and   $\mathscr{Q}_R=[[15,1,9;10]]^R$   performs  much better than that of the $[[25,1,9]]$ CQC. Moreover,     $\mathscr{Q}_B=[[15,1,9;10]]^B$   performs better than    $\mathscr{Q}_R=[[15,1,9;10]]^R$    when $p_b=p_a$.     While $p_b=0.1p_a,0.01p_a$,   $\mathscr{Q}_R=[[15,1,9;10]]^R$   performs much better than
  $\mathscr{Q}_B=[[15,1,9;10]]^B$   and the $[[25,1,9]]$ CQC.

We compare the error probability threshold  of the two EACQCs with that of the  CQC. For the $[[5,1,3]]$ stabilizer code and the $[[25,1,9]]$ CQC, the thresholds are $p>0.09$  and $p>0.18$, respectively.  Thus the CQC scheme can improve the error probability threshold.
For the EACQCs, when $p_b=0.5p_a$, the thresholds of   $\mathscr{Q}_B=[[15,1,9;10]]^B$   and      $\mathscr{Q}_R=[[15,1,9;10]]^R$   are $p>0.25$ and $p>0.14$, respectively. While  $p_b $ becomes sufficiently lower, e.g.,  $p_b=0.01p_a$, the thresholds  of   $\mathscr{Q}_B=[[15,1,9;10]]^B$  and  $\mathscr{Q}_R=[[15,1,9;10]]^R$   are $p>0.41$ and $p>0.47$, respectively. Therefore the EACQC scheme can greatly improve the error probability threshold   when the error probability of ebits is much lower than that of qubits.
%When $p_a=p_b$, the $\mathcal{Q}_R=[[15,1,9;10]]_R$ EACQC has a much lower threshold.
%\section{Asymmetric Entanglement-Assisted Concatenated Quantum Codes} \label{AsymmetricEACQC}

%\section{Concatenated Catalytic Quantum Error Correction}
%In \cite{brun2014catalytic,brun2006correcting}, catalytic quantum error correction was proposed.

%\vspace{5mm}
 \section{Conclusions and Discussions}
\label{Conclusions}
In this article, we have proposed the construction  of entanglement-assisted concatenated quantum codes by concatenating an inner code with an outer code. We  not only have generalized the idea of concatenation to EAQECCs but also have shown that EACQCs can outperform many existed results.
We have further shown that EACQCs can beat the nondegenerate Hamming bound for EAQECCs while the standard CQCs cannot do so.  We have derived many EACQCs with   larger minimum distances than the best known  QECCs and EAQECCs in \cite{Grassl:codetables} of the same length and net transmission.
In addition, we have constructed several catalytic EACQCs with little entanglement and better parameters than the best known QECCs and EAQECCs. We have also constructed a family of asymmetric EACQCs that can beat the nondegenerate Hamming bound for asymmetric EAQECCs.  Finally, we have computed the entanglement fidelity of two EACQCs and compared them with the $[[25,1,9]]$  CQC. We have shown that EACQCs can outperform CQCs in entanglement fidelity when the ebits are less noisy than qubits. In particular, we have shown that EACQCs   have  much higher error thresholds than CQCs  when the error probability of ebits is sufficiently lower than that of qubits.   These properties of EACQCs make them     very competitive with standard CQCs for both   quantum communication and fault-tolerant quantum computation.
%\section{Entanglement-Assisted Concatenated Quantum Codes with Maximal Entanglement}
%We show that if the inner and outer codes are both EAQECCs with maximal entanglement, then we can derive an maximal-entanglement EACQC. We   consider the construction of EACQCs from quaternary codes. It is known that maxima-entanglement  EAQECCs are closely related to the linear complementary dual (LCD) codes which can lead to an EAQECC with maxima entanglement directly.

%\bibliographystyle{apsrev4-2}
\bibliography{eacqc}

\appendices

\section{Proof of Theorem 2}
\label{App1}
We need the following estimation of a sum of binomial coefficient.

\begin{lemma}[\cite{macwilliams1981theory}]
Let $n\geq 2 $ be an integer. For $1<i<n$, there is
\begin{equation}
\frac{1}{\sqrt{8n\alpha(1-\alpha)}} 2^{nH_2(\alpha)}\leq \binom{n}{i} \leq   \frac{1}{\sqrt{2\pi n\alpha(1-\alpha)}} 2^{nH_2(\alpha)},
\end{equation}
where $\alpha=i/n$, and $H_2(x)=-x\log_2x-(1-x)\log_2(1-x)$ is the binary entropy function.
\end{lemma}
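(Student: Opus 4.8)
The plan is to derive both inequalities from Stirling's formula in its explicit two-sided (Robbins) form
\begin{equation}
n! = \sqrt{2\pi n}\,\Bigl(\frac{n}{e}\Bigr)^{n} e^{r_n}, \qquad \frac{1}{12n+1} < r_n < \frac{1}{12n},
\end{equation}
applied separately to the three factorials in $\binom{n}{i}=\frac{n!}{i!\,(n-i)!}$. Writing $i=\alpha n$ and $n-i=(1-\alpha)n$, the powers of $e$ coming from $(n/e)^n$, $(i/e)^i$ and $((n-i)/e)^{n-i}$ cancel exactly because $i+(n-i)=n$, while the square-root prefactors combine as
\begin{equation}
\frac{\sqrt{2\pi n}}{\sqrt{2\pi i}\,\sqrt{2\pi(n-i)}}=\frac{1}{\sqrt{2\pi}}\cdot\frac{\sqrt{n}}{\sqrt{i(n-i)}}=\frac{1}{\sqrt{2\pi n\,\alpha(1-\alpha)}}.
\end{equation}
This already fixes the common shape of both bounds; what remains is to identify the exponential factor and then to control the Stirling remainders $e^{r_n-r_i-r_{n-i}}$.

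For the exponential factor I would compute
\begin{equation}
\frac{n^{n}}{i^{\,i}(n-i)^{\,n-i}}=\frac{n^{n}}{(\alpha n)^{\alpha n}\bigl((1-\alpha)n\bigr)^{(1-\alpha)n}}=\alpha^{-\alpha n}(1-\alpha)^{-(1-\alpha)n},
\end{equation}
and rewrite the right-hand side base-$2$ as $2^{-n(\alpha\log_2\alpha+(1-\alpha)\log_2(1-\alpha))}=2^{nH_2(\alpha)}$ by the definition of $H_2$. Combined with the prefactor above, this yields the single main term $\frac{1}{\sqrt{2\pi n\,\alpha(1-\alpha)}}\,2^{nH_2(\alpha)}$ that appears on \emph{both} sides of the claimed inequality. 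Hence the two stated constants $\sqrt{2\pi}$ and $\sqrt{8}$ are produced entirely by the residual factor $e^{r_n-r_i-r_{n-i}}$, and the whole problem reduces to bounding that one quantity above and below.

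The crux is therefore a careful bookkeeping of this residual. For the \emph{upper} bound I would use $r_n<\tfrac{1}{12n}$ on the numerator together with $r_i>\tfrac{1}{12i+1}$ and $r_{n-i}>\tfrac{1}{12(n-i)+1}$ on the denominator, so the residual exponent is at most $\tfrac{1}{12n}-\tfrac{1}{12i+1}-\tfrac{1}{12(n-i)+1}$; since $1<i<n$ forces $12i+1<12n$ and hence $\tfrac{1}{12i+1}>\tfrac{1}{12n}$, this exponent is negative, $e^{r_n-r_i-r_{n-i}}<1$, and the bound $\binom{n}{i}\le \frac{1}{\sqrt{2\pi n\,\alpha(1-\alpha)}}2^{nH_2(\alpha)}$ follows with no slack. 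For the \emph{lower} bound I would reverse the choices, taking $r_n>\tfrac{1}{12n+1}$ and $r_i<\tfrac{1}{12i}$, $r_{n-i}<\tfrac{1}{12(n-i)}$, which reduces the claim to verifying
\begin{equation}
\exp\Bigl(\tfrac{1}{12n+1}-\tfrac{1}{12i}-\tfrac{1}{12(n-i)}\Bigr)\ \ge\ \frac{\sqrt{2\pi}}{\sqrt{8}}=\frac{\sqrt{\pi}}{2}.
\end{equation}
I expect this last step to be the main obstacle: the residual exponent is most negative exactly when $i$ or $n-i$ is as small as the hypothesis $1<i<n$ permits, so the delicate point is the boundary regime rather than the generic one. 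There I would bound $\tfrac{1}{12i}+\tfrac{1}{12(n-i)}-\tfrac{1}{12n+1}$ from above and check it stays below $\ln(2/\sqrt{\pi})$, then note monotonicity guarantees the inequality only improves as $i$ and $n-i$ grow. The gap between the ``natural'' Stirling constant $\sqrt{2\pi}$ and the looser stated constant $\sqrt{8}$ is precisely the margin that absorbs these boundary remainders, which is why the lower bound must carry the weaker $\sqrt{8}$ while the upper bound can retain the sharp $\sqrt{2\pi}$.
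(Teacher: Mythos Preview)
The paper does not prove this lemma at all: it is stated in the appendix with a citation to \cite{macwilliams1981theory} and then used as a black box in the proof of Theorem~2. So there is no in-paper argument to compare against; your Robbins--Stirling derivation is essentially the standard textbook proof (and indeed the one MacWilliams--Sloane give).

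Your outline is correct. The one place you leave open---the boundary check for the lower bound---does close. Under the hypothesis $1<i<n$ (so $i\ge 2$, $n-i\ge 1$, and effectively $n\ge 3$ for the range to be nonempty), the quantity $\tfrac{1}{12i}+\tfrac{1}{12(n-i)}-\tfrac{1}{12n+1}$ is maximised at $(i,n-i)=(2,1)$, $n=3$, where it equals $\tfrac{1}{24}+\tfrac{1}{12}-\tfrac{1}{37}\approx 0.098$, comfortably below $\ln(2/\sqrt{\pi})\approx 0.1208$; for larger $n$ or larger $\min(i,n-i)$ it only decreases. So the residual factor satisfies $e^{r_n-r_i-r_{n-i}}\ge \sqrt{\pi}/2=\sqrt{2\pi}/\sqrt{8}$ as needed, and your argument is complete once this numerical verification is inserted.
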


\setcounter{theorem}{1}
\begin{theorem}
 There exist the following four families of EACQCs with parameters
\begin{itemize}
\item  [$(\rm{I})$] $\mathscr{Q}_{e_1} = [[5n_2,1,d_{e_1}\geq 3n_2;n_2-1 ]],$
where $n_2\geq3$ is odd.
\item  [$(\rm{II})$] $\widetilde{\mathscr{Q}}_{e_1} = [[5n_2,1,\widetilde{d}_{e_1}\geq 3n_2-3;n_2-1 ]],$
where $n_2\geq10 $ is even.
\item  [$(\rm{III})$] $\mathscr{Q}_{e_2} = [[4n_2,1,d_{e_2}\geq 3n_2;2n_2-1 ]],$
where $n_2\geq11$ is odd.
\item  [$(\rm{IV})$]$\widetilde{\mathscr{Q}}_{e_2} = [[4n_2,1,\widetilde{d}_{e_2}\geq 3n_2-3;2n_2-1 ]],$
where  $n_2\geq 32$ is even.
\end{itemize}
   EACQCs in $(\rm{I})-(\rm{IV})$  can beat the nondegenerate quantum Hamming bound for EAQECCs, respectively.
\end{theorem}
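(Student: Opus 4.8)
The plan is to realise each of the four families as an entanglement-assisted concatenation via Theorem~\ref{EACQCLemma}, and then to check, family by family, that the resulting parameters violate inequality~(\ref{EAQECCHammingBound}). First I would fix the component codes: for $(\mathrm{I})$ and $(\mathrm{II})$ take the inner code $Q_I=[[5,1,3;0]]$ (the perfect five-qubit code), and for $(\mathrm{III})$ and $(\mathrm{IV})$ take the inner code $Q_I=[[4,1,3;1]]$ (cited in the text). For the outer code, when $n_2$ is odd take the entanglement-assisted repetition code $Q_O=[[n_2,1,n_2;n_2-1]]$, and when $n_2$ is even take $Q_O=[[n_2,1,n_2-1;n_2-1]]$. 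Both outer codes are produced by Lemma~\ref{CSSEAQECCs} with $C_1=C_2$ a binary $[n_2,1,\cdot]$ code: the repetition code $[n_2,1,n_2]$ for odd $n_2$, and the code spanned by a single weight-$(n_2-1)$ word for even $n_2$. In each case one checks $C_1\cap C_1^{\perp}=\{0\}$, so that $c=\mathrm{rank}(H_1H_1^{T})=n_2-1$ and the CSS dimension formula gives $k=1$; the parity of $n_2$ enters precisely here, since for even $n_2$ the all-ones word lies in the dual of the repetition code and would drop the rank below $n_2-1$, which forces the switch to the weight-$(n_2-1)$ code. Substituting these into Theorem~\ref{EACQCLemma} (with $d_e\ge d_1d_2$ and $c_e=c_1n_2+c_2k_1$) yields exactly the parameters listed in $(\mathrm{I})$--$(\mathrm{IV})$.

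Next I would contradict the nondegenerate Hamming bound. Since the true distance of each code is at least the stated value, and enlarging $d$ only adds non-negative terms to the left side of~(\ref{EAQECCHammingBound}) while fixing the right side, it suffices to show
\[
S(n_2)\;:=\;\sum_{i=0}^{\lfloor (d-1)/2\rfloor}3^{i}\binom{n}{i}\;>\;2^{\,n+c-k}
\]
for the stated $(n,k,d,c)$; note $n+c-k=6n_2-2$ in all four families. I would bound $S(n_2)$ below by its top term $3^{t}\binom{n}{t}$, $t=\lfloor (d-1)/2\rfloor$, together with the geometric tail of the preceding terms (the ratio of consecutive top terms being $\tfrac13\cdot\tfrac{t}{n-t+1}$), estimating $\binom{n}{t}$ by the binomial inequality recalled above. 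Since $t/n\to 3/10$ in $(\mathrm{I}),(\mathrm{II})$ and $t/n\to 3/8$ in $(\mathrm{III}),(\mathrm{IV})$, the exponent $t\log_2 3+nH_2(t/n)$ of the top term grows like $\approx 6.78\,n_2$ in the first two families and like $\approx 6.19\,n_2$ in the last two, both strictly above $6n_2-2$, so $S(n_2)>2^{6n_2-2}$ for all sufficiently large $n_2$.

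To pass from ``sufficiently large'' to the claimed cut-offs I would combine a monotonicity step with a finite check. Increasing $n_2$ by $2$ (within a parity class) multiplies the right side of~(\ref{EAQECCHammingBound}) by $2^{12}$, whereas the shift $n\mapsto n+2n_1$, $t\mapsto t+3$ multiplies the dominant term of $S$ by a strictly larger factor, as one reads off from the binomial estimate; hence once the inequality holds at some $n_2$ it holds for all larger $n_2$ of that parity. It then remains to verify $S(n_2)>2^{6n_2-2}$ directly at $n_2=3$ for $(\mathrm{I})$, $n_2=10$ for $(\mathrm{II})$, $n_2=11$ for $(\mathrm{III})$ and $n_2=32$ for $(\mathrm{IV})$ (and to see it fails at the next smaller admissible value, so the cut-offs are sharp). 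For $(\mathrm{I})$ and $(\mathrm{II})$ the boundary margin is comfortable. For $(\mathrm{III})$ and $(\mathrm{IV})$ it is delicate: already at $n_2=11$ the single top term $3^{16}\binom{44}{16}$ falls just short of $2^{64}$, and the geometric tail (a factor $\approx 1.22$) is needed to cross; at $n_2=32$ even the crude form of the binomial estimate is barely sufficient, so exact arithmetic at the boundary is required. This tight finite-length bookkeeping near $n_2=11$ and $n_2=32$ — controlling the logarithmic correction in the binomial estimate and the lower-order terms of $S$ sharply enough to pin down the exact cut-offs — is the step I expect to be the main obstacle; the asymptotic regime is routine.
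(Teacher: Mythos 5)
Your construction step is the same as the paper's: inner codes $[[5,1,3;0]]$ and $[[4,1,3;1]]$, outer codes $[[n_2,1,n_2;n_2-1]]$ (odd $n_2$) and $[[n_2,1,n_2-1;n_2-1]]$ (even $n_2$), fed into Theorem~1; the paper simply cites these outer codes from the literature rather than rederiving them from the CSS lemma as you do, but your rank/hull computation reproduces exactly the cited parameters. Your strategy for violating the bound is also the paper's: lower-bound the Hamming sum by its dominant term $3^{t}\binom{n}{t}$ via the binomial-entropy estimate and compare against $2^{n+c-k}=2^{6n_2-2}$.

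Where you genuinely add something is in the finite-length bookkeeping. The paper writes out the chain of inequalities only for family $(\mathrm{I})$ (with $n_2=2m_2+1$, keeping just the term $3^{3m_2+1}\binom{10m_2+5}{3m_2+1}$) and disposes of $(\mathrm{II})$--$(\mathrm{IV})$ with the remark that ``similar results'' follow from the same lemma. Your observation that this cannot literally work for $(\mathrm{III})$ at $n_2=11$ is correct and worth stressing: there $t=16$, $n=44$, the exact top term $3^{16}\binom{44}{16}\approx 2^{63.95}$ already falls below $2^{64}$, and the entropy lower bound on it is weaker still (about $2^{63.8}$), so one must keep at least the $i=15$ term (or, as you put it, the geometric tail) to cross the threshold; a similar tightness occurs at $n_2=32$ in $(\mathrm{IV})$. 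Your monotonicity-in-$n_2$ step plus explicit boundary verification is therefore not optional polish but the piece that actually closes the argument for the stated cut-offs, and in that respect your write-up is more complete than the paper's proof of its own theorem.
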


 \begin{proof}
 We now elaborate the constructions of these EACQCs and their properties.  The EACQCs in $(\rm{I})-(\rm{IV})$ were all
 constructed according to  Theorem 1.

The EACQC $\mathscr{Q}_{e_1} $ in $(\rm{I})$ was constructed   by choosing the inner code $Q_I=[[5,1,3;0]]$ and the outer code  $Q_{O }=[[n_2,1,n_2;n_2-1]]$ in Ref.~\cite{lai2013duality}, where $n_2\geq3$ is odd.
The EACQC $\widetilde{\mathscr{Q}}_{e_1}  $ $(\rm{II})$ was constructed similarly by choosing the inner code $Q_I=[[5,1,3;0]]$ and the outer code  $Q_O=[[n_2,1,n_2-1;n_2-1]]$ in Ref.~\cite{lai2013duality}, where $n_2\geq 4 $ is even.
While letting the inner and outer codes   be $Q_I=[[4,1,3;1]]$ and    $Q_{O }=[[n_2,1,n_2;n_2-1]]$, respectively, where $n_2\geq3$   is odd,  the EACQC    $\mathscr{Q}_{e_2} $ in $(\rm{III})$ was derived.
The EACQC $\widetilde{\mathscr{Q}}_{e_2} $ in $(\rm{IV})$ was derived by
choosing  the inner and outer codes   as $Q_I=[[4,1,3;1]]$ and    $Q_{O }=[[n_2,1,n_2-1;n_2-1]]$, respectively, where $n_2\geq4$ is even.

 We need to prove that EACQCs in $\rm{(I)-(IV)}$ can beat the nondegenerate Hamming bound for EAQECCs.  For the $\mathscr{Q}_{e_1} = [[5n_2,1,d_{e_1}\geq 3n_2;n_2-1 ]] $ EACQC   in  $\rm{(I)}$, let $n_2=2m_2+1(m_2\geq1) $. According to Lemma \textbf{S1}, we have
 \begin{equation}
 \footnotesize
     \sum_{i=0}^{\lfloor\frac{3n_2-1}{2}\rfloor}3^i\binom{5n_2}{i}\geq 3^{3m_2+1}\binom{10m_2+5}{3m_2+1}\geq
     \frac{3^{3m_2+1}}{\sqrt{8(10m_2+5)\alpha_1(1-\alpha_1)}} 2^{(10m_2+5)H_2(\alpha_1)}\geq 2^{12m_2+4}=2^{6n_2-2}
 \end{equation}
 for all $m_2\geq1$, where $\alpha_1=(3m_2+1)/(10m_2+1)$. Thus the   EACQC  $\mathscr{Q}_{e_1}$ in  $\rm{(I)}$ can beat the nondegenerate Hamming bound for EAQECCs. For EACQCs in $\rm{(II)-(IV)}$, we can get similar results based on Lemma \textbf{S1}.

  In addition,  it is easy to verify that the relative distance of each EACQC  in $(\rm{I})-(\rm{IV})$
   violates  the asymptotic bound of nondegenerate Hamming bound for EAQECCs as the code length goes to infinity.
 \end{proof}
 \section{EACQCs Beating the Best Known  QECCs and EAQECCs}
 \label{App2}
In \cite{fan2016constructions,qian2019constructions}, entanglement-assisted
quantum maximum-distance-separable (EAQMDS) codes achieving the Singleton bound were constructed.
 We have the following general result about the construction of EAQMDS codes.
\begin{lemma}\label{EAQMDSGeneral}
For any $q+1\leq n\leq q^2+1$ and any integer $2\leq d< n/2-1 $, there exist    EAQMDS codes $
Q=[[n,k,d;c]]_q
$ with positive net transmission, i.e., $k>c$,  where $n+c=k+2d-2$,  and $0\leq c\leq n$.
\end{lemma}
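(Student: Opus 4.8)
The plan is to prove Lemma~\ref{EAQMDSGeneral} by invoking the known constructions of entanglement-assisted quantum MDS codes from (generalized) Reed--Solomon codes over $GF(q)$ and $GF(q^2)$, and then verifying that the resulting parameters satisfy the positive net transmission condition $k>c$. First I would recall that a classical MDS code $[n,\kappa,n-\kappa+1]_{q^s}$ exists for every $2\le \kappa\le n$ whenever $n\le q^s+1$ (extended Reed--Solomon codes), so in particular $[n,\kappa,n-\kappa+1]_{q^2}$ exists for all $q+1\le n\le q^2+1$. Feeding such a code into the Hermitian construction of Lemma~\ref{HermitianEAQECCs} gives an EAQECC $[[n,2\kappa-n+c,d_e\ge n-\kappa+1;c]]_q$ with $c=\operatorname{rank}(HH^\dagger)$. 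Choosing $\kappa$ so that $n-\kappa+1=d$, i.e. $\kappa=n-d+1$, yields $[[n,\,n-2d+2+c,\,d;c]]_q$; the Singleton-type bound $n+c\ge k+2d-2$ is met with equality, which is exactly the relation $n+c=k+2d-2$ claimed in the statement, and the code is therefore EAQMDS.

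Next I would address the range of $c$ and the net transmission. The key quantitative input is the behaviour of $c=\operatorname{rank}(HH^\dagger)$ for the parity-check matrix $H$ of a Reed--Solomon code: for suitably chosen evaluation points and code dimension, $c$ grows essentially linearly with $d$ (roughly $c\approx 2d-2-$ a constant depending on cyclotomic-coset structure), and by varying the defining set of the underlying (constacyclic or cyclic) MDS code one can realize a whole interval of admissible $c$ values. I would cite the explicit families in \cite{fan2016constructions,qian2019constructions} (and the general EAQMDS results referenced there) to assert that for each $q+1\le n\le q^2+1$ and each $2\le d<n/2-1$ there is a choice with $0\le c\le n$ and $k=n+2-2d+c$. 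Then positive net transmission is the inequality $k>c$, i.e. $n+2-2d>0$, i.e. $d<n/2+1$; since we assume $d<n/2-1$, this holds with room to spare, so $k>c$ is automatic. One also checks $k\ge 1$ under the same hypothesis.

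The main obstacle I anticipate is not the existence of the classical MDS codes (routine) nor the net transmission inequality (immediate from $d<n/2-1$), but rather pinning down that the value $c$ can be taken in the full stated range $0\le c\le n$ simultaneously with the EAQMDS property for every $n$ in the interval $q+1\le n\le q^2+1$ — the rank $\operatorname{rank}(HH^\dagger)$ is governed by delicate cyclotomic-coset / conjugate-pairing conditions on the defining set, and the literature typically handles $n=q^2+1$, $n=q^2-1$, divisors of $q^2\pm1$, and a few other special lengths rather than literally every $n$. My plan is therefore to structure the proof as: (i) reduce to the Hermitian (or Euclidean CSS over $GF(q)$) construction via Lemmas~\ref{HermitianEAQECCs} and~\ref{CSSEAQECCs}; (ii) for a generic length $n$, take a (possibly punctured/extended) generalized Reed--Solomon code whose parity-check matrix, after a diagonal rescaling of columns, has $HH^\dagger$ of the desired rank, quoting the relevant lemma from \cite{fan2016constructions} or \cite{qian2019constructions} for the rank computation; (iii) read off $k=n+2-2d+c>c$ from $d<n/2-1$ and conclude the code is EAQMDS since $d_e\ge d$ together with the EA Singleton bound forces equality. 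I would flag that the only place requiring care is step (ii), and I would either restrict the claim to the lengths covered by the cited constructions or include a short self-contained GRS rank argument to cover the remaining $n$.
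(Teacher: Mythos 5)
Your proposal is correct and follows essentially the same route as the paper: take a classical MDS code $[n,\,n-d+1,\,d]_{q^2}$ (which exists for all $q+1\le n\le q^2+1$), feed it into the Hermitian construction of Lemma~\ref{HermitianEAQECCs} to get $[[n,\,n-2d+2+c,\,d;c]]_q$ with $c=\operatorname{rank}(HH^\dagger)$, and observe that $k>c$ reduces to $d<n/2+1$, which is implied by the hypothesis $d<n/2-1$. The ``main obstacle'' you flag --- realizing every value of $c$ in $[0,n]$ via cyclotomic-coset arguments --- is not actually required: the lemma only asserts the existence of \emph{some} EAQMDS code whose $c$ lies in $[0,n]$, and this is automatic since $c=\operatorname{rank}(HH^\dagger)\le n-k_1\le n$, so the paper (and your step (i) plus (iii)) already suffices without your step (ii).
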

\begin{proof}
It is known that there exist $q^2$-ary classical MDS codes $C=[n,k_1,d]_{q^2}$ of length $q+1\leq n\leq q^2+1$ \cite{seroussi1986on}, where $d=n-k_1 +1$. Let the parity-check matrix of $C$ be $H$ and denote by $c=rank(HH^\dag)$ the rank of $HH^\dag$. According to the Hermitian construction in \cite{brun2006correcting,wilde2008optimal}, there exist EAQMDS codes with parameters $
Q_e=[[n,2k_1-n+c,d;c]]_q
$. If $2\leq d< n/2-1 $, then we have $k=2k_1-n_1+c>c$.
\end{proof}

  In Appendix \ref{App2}, Table \ref{tables1} and Table \ref{tables2}, we list a number of EACQCs constructed  by using EAQMDS codes in Lemma \ref{EAQMDSGeneral} as the outer codes.

\begin{table*}
\tiny
 \caption{\scriptsize EACQCs with better paramters than   the best known QECCs  and EAQECCs of the same length and net transmission. {  The notation of the inner codes, for example, ``{$15\times[[4,2,2;0]]$+$2\times[[5,2,2;0]]$}'',  means
 that the inner codes are a mix of fifteen  $[[4,2,2;0]]$ codes and two $[[5,2,2;0]]$ codes.  The  outer codes are  EAQMDS codes in Lemma S2. The parameters $ k^*_2$, $k^*_e$ and $K^*_e$ stand for  the net transmissions of the $[[n_2,k^*_2,d_2]]_{2^{k_1}}$ outer code, the $ [[n_e,k^*_e,d_e ]]$ EACQC  and the $ [[N_e,K^*_e,D_e ]]$ EAQECC, respectively.   EAQECCs in the last column are obtained from the best known quaternary  codes in Ref.~\cite{Grassl:codetables}. For the QECCs and the EAQECCs  {missing explicit constructions} in  Ref.~\cite{Grassl:codetables}, their  parameters are in \textbf{bold} type. }}
\begin{center}
\begin{tabular}{lllll}
 \makecell[l]{$[[n_1,k_1,d_1;c_1]]$ \\Inner Codes}     &  \makecell[l]{$[[n_2,k^*_2,d_2]]_{2^{k_1}}$ \\Outer Codes}   &\makecell[l]{$[[n_e,k^*_e,d_e ]]$\\EACQCs }&    \makecell[l]{$[[n ,k ,d ]] $ \\ QECCs in  Ref.~\cite{Grassl:codetables}}&\makecell[l]{$[[N_e,K^*_e,D_e ]]$\\EAQECCs  from  Ref.~\cite{Grassl:codetables}}\\
\hline
$17\times[[4,2,2;0]]$  & $[[17, 1,9 ]]_4$ &$ [[68, 2, d_e\geq18]]$& $ [[68, 2, 16]]$  &$ [[68, 2, 16]]$ \\
\makecell[l]{$16\times[[4,2,2;0]]$+$[[5,2,2;0]]$} &  $[[17,1 ,9 ]]_4$ & $ [[69, 2 , d_e\geq18]]$&$ [[69, 2,  16]]  $ &$ [[69, 1 , 17]]$\\
\makecell[l]{$15\times[[4,2,2;0]]$+$2\times[[5,2,2;0]]$} &  $[[17,1 ,9 ]]_4$ & $ [[70, 2 , d_e\geq18]]$&$ [[70, 2,  16]]  $ &$ [[70, 2, 17]]$\\
\makecell[l]{$14\times[[4,2,2;0]]$+$3\times[[5,2,2;0]]$} &  $[[17,1 ,9 ]]_4$ & $ [[71, 2 , d_e\geq18]]$&$ [[71, 2,  16]]  $ &$ [[71, 1, 18]]$\\
\makecell[l]{$12\times[[4,2,2;0]]$+$5\times[[5,2,2;0]]$} &  $[[17,1 ,9 ]]_4$ & $ [[73, 2 , d_e\geq18]]$&$ [[73, 2,  16]]  $ &$ [[73, 1, 18]]$\\
\makecell[l]{$11\times[[4,2,2;0]]$+$6\times[[5,2,2;0]]$} &  $[[17,1 ,9 ]]_4$ & $ [[74, 2 , d_e\geq18]]$&$ [[74, 2,  16]]  $ &$ [[74, 2, 17]]$\\
\makecell[l]{$10\times[[4,2,2;0]]$+$7\times[[5,2,2;0]]$} &  $[[17,1 ,9 ]]_4$ & $ [[75, 2 , d_e\geq18]]$&$ [[75, 2,  17]]  $ &$ [[75, 1, 18]]$\\
%\makecell[l]{$[[4,2,2;0]]$+$[[5,2,2;0]]$} &  $[[17,1 ,9 ]]_4$ & $ [[69, 2 , d_e\geq18]]$&$ [[75, 2,  17]]$ &$ [[69, 1 , %17]]$\\
\makecell[l]{$16\times[[4,2,2;0]]$+$[[3,2,2;1]]$} &  $[[17,1 ,9 ]]_4$ &$ [[67, 1 , d_e\geq18]]$&$ [[67, 1, 17]]$ &$ [[67, 1, 17]]$  \\
\makecell[l]{$16\times[[4,2,2;0]]$+$[[5,2,2;0]]$+ $[[3,2,2;1]]$} &  $[[17,1 ,9 ]]_4$ &$ [[68, 1 , d_e\geq18]]$&$ [[68, 1, 17]]$ &$ [[68, 0, 18]]$  \\
\makecell[l]{$14\times[[4,2,2;0]]$+$2\times[[5,2,2;0]]$+ $[[3,2,2;1]]$}&  $[[17,1 ,9 ]]_4$ &$ [[69, 1 , d_e\geq18]]$&$ [[69, 1, 17]]$ &$ [[69, 1, 17]]$  \\
\makecell[l]{$13\times[[4,2,2;0]]$+$3\times[[5,2,2;0]]$+ $[[3,2,2;1]]$} &  $[[17,1 ,9 ]]_4$ &$ [[70, 1 , d_e\geq18]]$&$ [[70, 1, 17]]$ &$ [[70, 0, 18]]$  \\
\makecell[l]{$9\times[[4,2,2;0]]$+$7\times[[5,2,2;0]]$+ $[[3,2,2;1]]$} &  $[[17,1 ,9 ]]_4$ &$ [[74, 1 , d_e\geq18]]$&$ [[74, 1, 17]]$ &$ [[74, 0, 18]]$  \\
%\makecell[l]{$[[4,2,2;0]]$+$[[5,2,2;0]]$+\\$[[3,2,2;1]]$} &  $[[17,1 ,9 ]]_4$ &\makecell[l]{$ [[67+t, 1 , d_e\geq18]]$,
%$1\leq t\leq8$}&  \makecell[l]{$ [[67+i,1 , 17]]$, $1\leq i\leq8$}  \\
$17\times[[4,2,2;0]]$  &  $[[17,3 ,8]]_4$ &$ [[68, 6 , d_e\geq16]]$&$ [[68, 6, 14]]$ &$ [[68, 6, 15]]$ \\
\makecell[l]{$16\times[[4,2,2;0]]$+$[[5,2,2;0]]$} &  $[[17,3 ,8]]_4$ &$ [[69, 6 , d_e\geq16]]$&$ [[69, 6, 14]]$ &$ [[69, 5, 15]]$\\
\makecell[l]{$15\times[[4,2,2;0]]$+$2\times[[5,2,2;0]]$} &  $[[17,3 ,8]]_4$ &$ [[70, 6 , d_e\geq16]]$&$ [[70, 6, 14]]$ &$ [[70, 6, 15]]$\\
\makecell[l]{$14\times[[4,2,2;0]]$+$3\times[[5,2,2;0]]$} &  $[[17,3 ,8]]_4$ &$ [[71, 6 , d_e\geq16]]$&$ [[71, 6, 14]]$ &$ [[71, 5, 16]]$\\
\makecell[l]{$12\times[[4,2,2;0]]$+$5\times[[5,2,2;0]]$} &  $[[17,3 ,8]]_4$ &$ [[73, 6 , d_e\geq16]]$&$ [[73, 6, 14]]$ &$ [[73, 5, 16]]$\\
\makecell[l]{$16\times[[4,2,2;0]]$+$[[3,2,2;1]]$} &  $[[17,3 ,8]]_4$ &$ [[67, 5 , d_e\geq16]]$&$ [[67, 5, 14]]$ &$ [[67, 5, 15]]$\\
\makecell[l]{$15\times[[4,2,2;0]]$+$2\times[[3,2,2;1]]$} &  $[[17,3 ,8]]_4$ &$ [[66, 4 , d_e\geq16]]$&$ [[66, 4, 14]]$ &$ [[66, 4, 15]]$\\
%\makecell[l]{$[[4,2,2;0]]$+$[[5,2,2;0]]$} &  $[[17,3 ,8]]_4$ &\makecell[l]{$ [[68+t, 6 , d_e\geq16]]$, $1\leq t\leq %11$}&\makecell[l]{$ [[68+i, 6, 14]], 1\leq i\leq 5$,\\ $ [[68+j, 6, 15]], 6\leq j\leq 11$} \\ \hline
%\makecell[l]{$[[4,2,2;0]]$+$[[3,2,2;1]]$} &  $[[17,3 ,8]]_4$ &\makecell[l]{$ [[68-t , 6-t, d_e\geq16]]$, $1\leq t\leq %3$}&\makecell[l]{$ [[68-i, 6-i, 14]], 1\leq i\leq 2$, $ [[65, 3, 15]]$} \\ \hline
\makecell[l]{$15\times[[4,2,2;0]]$+$[[5,2,2;0]]$+ $[[3,2,2;1]]$} &  $[[17,3 ,8]]_4$ & $ [[68,5, d_e\geq16]]$ &   $ [[68,5 , 14]]$  &   $ [[68,4 , 15]]$   \\
\makecell[l]{$14\times[[4,2,2;0]]$+$2\times[[5,2,2;0]]$ $[[3,2,2;1]]$} &  $[[17,3 ,8]]_4$ & $ [[69,5, d_e\geq16]]$ &   $ [[69,5 , 14]]$  &   $ [[69,5, 15]]$   \\
\makecell[l]{$13\times[[4,2,2;0]]$+$3\times[[5,2,2;0]]$ $[[3,2,2;1]]$} &  $[[17,3 ,8]]_4$ & $ [[70,5, d_e\geq16]]$ &   $ [[70,5 , 14]]$  &   $ [[70,4 , 16]]$   \\
%\makecell[l]{$[[4,2,2;0]]$+$[[5,2,2;0]]$+\\$[[3,2,2;1]]$} &  $[[17,3 ,8]]_4$ &\makecell[l]{$ [[67+t,5, d_e\geq16]]$,
%$1\leq t\leq12$}&  \makecell[l]{$ [[67+i,5 , 14]]$,  $1\leq i\leq3$, \\$ [[67+j,5 , 15]]$,  $4\leq j\leq12$ }  \\ \hline
\makecell[l]{$14\times[[4,2,2;0]]$+$[[5,2,2;0]]$+ $2\times[[3,2,2;1]]$} &  $[[17,3 ,8]]_4$ & $ [[67,4, d_e\geq16]]$ &  $ [[67,4 , 14]]$   &  $ [[67,3 , 16]]$ \\
\makecell[l]{$13\times[[4,2,2;0]]$+$2\times[[5,2,2;0]]$+ $2\times[[3,2,2;1]]$} &  $[[17,3 ,8]]_4$ & $ [[68,4, d_e\geq16]]$ &  $ [[68,4 , 14]]$   &  $ [[68,4 , 15]]$ \\
\makecell[l]{$12\times[[4,2,2;0]]$+$3\times[[5,2,2;0]]$+ $2\times[[3,2,2;1]]$} &  $[[17,3 ,8]]_4$ & $ [[69,4, d_e\geq16]]$ &  $ [[69,4 , 14]]$   &  $ [[69,3, 16]]$ \\
$13\times[[10,2,4;0]]$  &  $[[13,1 ,7]]_4$ &$ [[130, 2 , d_e\geq28]]$&      $ \textbf{[[130, 2, 26]]}$  &      $ \textbf{[[130, 2, 27]]}$  \\
\makecell[l]{$12\times[[10,2,4;0]]$+$[[11,2,4;0]]$} &  $[[13,1 ,7]]_4$ & $ [[131, 2, d_e\geq28]]$& $\textbf{[[131, 2, 26]]} $& $\textbf{[[131, 1, 27]]} $  \\
\makecell[l]{$11\times[[10,2,4;0]]$+$2\times[[11,2,4;0]]$} &  $[[13,1 ,7]]_4$ & $ [[132, 2, d_e\geq28]]$& $\textbf{[[132, 2, 26]]} $& $\textbf{[[132, 2, 27]]} $  \\
\makecell[l]{$10\times[[10,2,4;0]]$+$3\times[[11,2,4;0]]$} &  $[[13,1 ,7]]_4$ & $ [[133, 2, d_e\geq28]]$& $\textbf{[[133, 2, 26]]} $& $\textbf{[[133, 1, 27]]} $  \\
\makecell[l]{$9\times[[10,2,4;0]]$+$4\times[[11,2,4;0]]$} &  $[[13,1 ,7]]_4$ & $ [[134, 2, d_e\geq28]]$& $ {[[134, 2, 26]]} $& $\textbf{[[134, 1, 27]]} $  \\
\makecell[l]{$8\times[[10,2,4;0]]$+$5\times[[11,2,4;0]]$} &  $[[13,1 ,7]]_4$ & $ [[135, 2, d_e\geq28]]$& $ {[[135, 2, 27]]} $& $\textbf{[[135, 2, 28]]} $  \\
\makecell[l]{$7\times[[10,2,4;0]]$+$6\times[[11,2,4;0]]$} &  $[[13,1 ,7]]_4$ & $ [[136, 2, d_e\geq28]]$& $ {[[136, 2, 27]]} $& $\textbf{[[136, 2, 28]]} $  \\
\makecell[l]{$6\times[[10,2,4;0]]$+$7\times[[11,2,4;0]]$} &  $[[13,1 ,7]]_4$ & $ [[137, 2, d_e\geq28]]$& $ {[[137, 2, 27]]} $& $\textbf{[[137, 1, 28]]} $  \\
$14\times[[10,2,4;0]]$  &  $[[14,2 ,7]]_4$ &$ [[140, 4 , d_e\geq28]]$&      $ \textbf{[[140, 4, 27]]}$&      $ \textbf{[[140, 4, 28]]}$   \\
\makecell[l]{$13\times[[10,2,4;0]]$+$[[11,2,4;0]]$} &  $[[14,2 ,7]]_4$ & $ [[141, 4, d_e\geq28]]$ &  $ \textbf{[[141, 4, 27]]} $ &  $ \textbf{[[141, 3, 28]]} $  \\
\makecell[l]{$12\times[[10,2,4;0]]$+$2\times[[11,2,4;0]]$} &  $[[14,2 ,7]]_4$ & $ [[142, 4, d_e\geq28]]$ &  $ \textbf{[[142, 4, 27]]} $ &  $ \textbf{[[142, 4, 28]]} $  \\

$15\times[[10,2,4;0]]$  &  $[[15,1 ,8]]_4$ &$ [[150, 2 , d_e\geq32]]$&     $ [[150, 2, 30]]$ &     $ \textbf{[[150, 2, 30]]}$  \\
\makecell[l]{$14\times[[10,2,4;0]]$+$[[11,2,4;0]]$} &  $[[15,1 ,8]]_4$ & $ [[151, 2, d_e\geq32]]$ &   $ [[151, 2, 30]] $ &
 $ \textbf{[[151, 1, 31]]} $ \\

\makecell[l]{$13\times[[10,2,4;0]]$+$2\times[[11,2,4;0]]$} &  $[[15,1 ,8]]_4$ & $ [[152, 2, d_e\geq32]]$ &   $ [[152, 2, 30]] $ &
 $ \textbf{[[152, 2, 31]]} $ \\
\makecell[l]{$12\times[[10,2,4;0]]$+$3\times[[11,2,4;0]]$} &  $[[15,1 ,8]]_4$ & $ [[153, 2, d_e\geq32]]$ &   $ [[153, 2, 30]] $ &
 $ \textbf{[[153, 1, 31]]} $ \\
\makecell[l]{$11\times[[10,2,4;0]]$+$4\times[[11,2,4;0]]$} &  $[[15,1 ,8]]_4$ & $ [[154, 2, d_e\geq32]]$ &   $ [[154, 2, 30]] $ &
 $ \textbf{[[154, 2, 31]]} $ \\
\makecell[l]{$10\times[[10,2,4;0]]$+$5\times[[11,2,4;0]]$} &  $[[15,1 ,8]]_4$ & $ [[155, 2, d_e\geq32]]$ &   $ [[155, 2, 30]] $ &
 $ \textbf{[[155, 1, 32]]} $ \\
\makecell[l]{$9\times[[10,2,4;0]]$+$6\times[[11,2,4;0]]$} &  $[[15,1 ,8]]_4$ & $ [[156, 2, d_e\geq32]]$ &   $ [[156, 2, 30]] $ &
 $ \textbf{[[156, 2, 32]]} $ \\
\makecell[l]{$8\times[[10,2,4;0]]$+$7\times[[11,2,4;0]]$} &  $[[15,1 ,8]]_4$ & $ [[157, 2, d_e\geq32]]$ &   $ \textbf{[[157, 2, 31]]} $ &
 $ \textbf{[[157, 1, 32]]} $ \\
\makecell[l]{$7\times[[10,2,4;0]]$+$8\times[[11,2,4;0]]$} &  $[[15,1 ,8]]_4$ & $ [[158, 2, d_e\geq32]]$ &   $ \textbf{[[158, 2, 31]]} $ &
 $ \textbf{[[158, 2, 32]]} $ \\
\makecell[l]{$6\times[[10,2,4;0]]$+$9\times[[11,2,4;0]]$} &  $[[15,1 ,8]]_4$ & $ [[159, 2, d_e\geq32]]$ &   $ \textbf{[[159, 2, 31]]} $ &
 $ \textbf{[[159, 1, 32]]} $ \\
\makecell[l]{$5\times[[10,2,4;0]]$+$10\times[[11,2,4;0]]$} &  $[[15,1 ,8]]_4$ & $ [[160, 2, d_e\geq32]]$ &   $ \textbf{[[160, 2, 31]]} $ &
 $ \textbf{[[160, 2, 32]]} $ \\
\hline
\end{tabular}
\end{center}
\label{tables1}
\end{table*}

\begin{table*}\tiny
 \caption{\scriptsize EACQCs with better parameters  than the best known   QECCs and EAQECCs   of the same length and net transmission (cont.). {The notation of the inner codes, for example, ``{$14\times[[10,2,4;0]]$+$2\times[[11,2,4;0]]$}'',  means
 that the inner codes are a mix of fourteen   $[[10,2,4;0]]$ codes and two $[[11,2,4;0]]$ codes.  The  outer codes are  EAQMDS codes in Lemma S2. The parameters $ k^*_2$, $k^*_e$ and $K^*_e$ stand for  the net transmissions of the $[[n_2,k^*_2,d_2]]_{2^{k_1}}$ outer code, the $ [[n_e,k^*_e,d_e ]]$ EACQC  and the $ [[N_e,K^*_e,D_e ]]$ EAQECC, respectively.  EAQECCs in the last column are obtained from the best known quaternary  codes in Ref.~\cite{Grassl:codetables}.  For the QECCs and the EAQECCs  missing explicit constructions   in  Ref.~\cite{Grassl:codetables}, their  parameters are in \textbf{bold} type. }}
\begin{center}
\begin{tabular}{lllll}
\makecell[l]{$[[n_1,k_1,d_1;c_1]]$\\ Inner Codes}     &  \makecell[l]{$[[n_2,k^*_2,d_2]]_{2^{k_1}}$ \\Outer Codes}   &\makecell[l]{$[[n_e,k^*_e,d_e ]]$\\EACQCs }&    \makecell[l]{$[[n ,k ,d ]] $ \\ QECCs in  Ref.~\cite{Grassl:codetables}}&\makecell[l]{$[[N_e,K^*_e,D_e ]]$\\EAQECCs  from  Ref.~\cite{Grassl:codetables}}\\
\hline
$16\times[[10,2,4;0]]$  &  $[[16,2 ,8]]_4$ &$ [[160, 4 , d_e\geq32]]$&     $ \textbf{[[160, 4, 31]]}$  &     $ \textbf{[[160, 4, 32]]}$   \\

\makecell[l]{$15\times[[10,2,4;0]]$+$[[11,2,4;0]]$}  &  $[[16,2 ,8]]_4$ & $ [[161,4,d_e\geq32]]$ & $\textbf{[[161, 4, 31]]} $
& $\textbf{[[161, 3, 32]]} $   \\

\makecell[l]{$14\times[[10,2,4;0]]$+$2\times[[11,2,4;0]]$}  &  $[[16,2 ,8]]_4$ & $ [[162,4,d_e\geq32]]$ & $\textbf{[[162, 4, 31]]} $
& $\textbf{[[162, 4, 32]]} $   \\

\makecell[l]{$15\times[[10,2,4;0]]$+$[[8,2,4;2]]$}  &  $[[16,2 ,8]]_4$ &$ [[158, 2, d_e\geq32]]$&      $ \textbf{[[158, 2, 31]]}$
&      $ \textbf{[[158, 2, 32]]}$  \\

\makecell[l]{$14\times[[10,2,4;0]]$+$[[11,2,4;0]]$+$[[8,2,4;2]]$}  &  $[[16,2 ,8]]_4$ & $ [[159, 2, d_e\geq32]]$ &   $ \textbf{[[159, 2, 31]]} $ &   $ \textbf{[[159, 1, 32]]} $  \\

\makecell[l]{$13\times[[10,2,4;0]]$+$2\times[[11,2,4;0]]$+$[[8,2,4;2]]$}  &  $[[16,2 ,8]]_4$ & $ [[160, 2, d_e\geq32]]$ &   $ \textbf{[[160, 2, 31]]} $ &   $ \textbf{[[160, 2, 32]]} $  \\

$17\times[[10,2,4;0]]$  &  $[[17,1,9]]_4$ &$ [[170, 2 , d_e\geq36]]$ &    $ \textbf{[[170, 2, 33]]}$ &    $ \textbf{[[170, 2, 34]]}$  \\

\makecell[l]{$16\times[[10,2,4;0]]$+$[[11,2,4;0]]$} &  $[[17,1 ,9]]_4$ &$ [[171, 2 , d_e\geq36]]$&    $ \textbf{[[171, 2, 33]]}$ &    $ \textbf{[[171, 1, 35]]}$  \\

\makecell[l]{$15\times[[10,2,4;0]]$+$2\times[[11,2,4;0]]$} &  $[[17,1 ,9]]_4$ &$ [[172, 2 , d_e\geq36]]$&    $ \textbf{[[172, 2, 34]]}$ &    $ \textbf{[[172, 2, 35]]}$  \\

\makecell[l]{$14\times[[10,2,4;0]]$+$3\times[[11,2,4;0]]$} &  $[[17,1 ,9]]_4$ &$ [[173, 2 , d_e\geq36]]$&    $ \textbf{[[173, 2, 34]]}$ &    $ \textbf{[[173, 1, 35]]}$  \\

\makecell[l]{$13\times[[10,2,4;0]]$+$4\times[[11,2,4;0]]$} &  $[[17,1 ,9]]_4$ &$ [[174, 2 , d_e\geq36]]$&    $ \textbf{[[174, 2, 34]]}$ &    $ \textbf{[[174, 2, 35]]}$  \\

\makecell[l]{$12\times[[10,2,4;0]]$+$5\times[[11,2,4;0]]$} &  $[[17,1 ,9]]_4$ &$ [[175, 2 , d_e\geq36]]$&    $ \textbf{[[175, 2, 34]]}$ &    $ \textbf{[[175, 1, 35]]}$  \\

\makecell[l]{$11\times[[10,2,4;0]]$+$6\times[[11,2,4;0]]$} &  $[[17,1 ,9]]_4$ &$ [[176, 2 , d_e\geq36]]$&    $ \textbf{[[176, 2, 34]]}$ &    $ \textbf{[[176, 2, 35]]}$  \\

\makecell[l]{$10\times[[10,2,4;0]]$+$7\times[[11,2,4;0]]$} &  $[[17,1 ,9]]_4$ &$ [[177, 2 , d_e\geq36]]$&    $ \textbf{[[177, 2, 34]]}$ &    $ \textbf{[[177, 1, 36]]}$  \\

\makecell[l]{$9\times[[10,2,4;0]]$+$8\times[[11,2,4;0]]$} &  $[[17,1 ,9]]_4$ &$ [[178, 2 , d_e\geq36]]$&    $ \textbf{[[178, 2, 35]]}$ &    $ \textbf{[[178, 2, 36]]}$  \\

\makecell[l]{$8\times[[10,2,4;0]]$+$9\times[[11,2,4;0]]$} &  $[[17,1 ,9]]_4$ &$ [[179, 2 , d_e\geq36]]$&    $ \textbf{[[179, 2, 35]]}$ &    $ \textbf{[[179, 1, 36]]}$  \\

\makecell[l]{$7\times[[10,2,4;0]]$+$10\times[[11,2,4;0]]$} &  $[[17,1 ,9]]_4$ &$ [[180, 2 , d_e\geq36]]$&    $ \textbf{[[180, 2, 35]]}$ &    $ \textbf{[[180, 2, 36]]}$  \\

\makecell[l]{$6\times[[10,2,4;0]]$+$11\times[[11,2,4;0]]$} &  $[[17,1 ,9]]_4$ &$ [[182, 2 , d_e\geq36]]$&    $ \textbf{[[182, 2, 35]]}$ &    $ \textbf{[[182, 2, 36]]}$  \\

   $15\times[[16,2,6;0]]$  &  $[[15,1 ,8]]_4$ &$ [[240, 2 , d_e\geq48]]$&     $ \textbf{[[240, 2, 46]]}$  &     $ \textbf{[[240, 2, 48]]}$   \\

\makecell[l]{$14\times[[16,2,6;0]]$+$[[17,2,6;0]]$} &  $[[15,1 ,8]]_4$ &$ [[241, 2 , d_e\geq48]]$&    $ \textbf{[[241, 2, 47]]}$ &    $ \textbf{[[241, 1, 48]]}$  \\

\makecell[l]{$13\times[[16,2,6;0]]$+$2\times[[17,2,6;0]]$} &  $[[15,1 ,8]]_4$ &$ [[242, 2 , d_e\geq48]]$&    $ \textbf{[[242, 2, 47]]}$ &    $ \textbf{[[242, 2, 48]]}$  \\

\makecell[l]{$12\times[[16,2,6;0]]$+$3\times[[17,2,6;0]]$} &  $[[15,1 ,8]]_4$ &$ [[243, 2 , d_e\geq48]]$&    $ \textbf{[[243, 2, 47]]}$ &    $ \textbf{[[243, 1, 48]]}$  \\

\makecell[l]{$11\times[[16,2,6;0]]$+$4\times[[17,2,6;0]]$} &  $[[15,1 ,8]]_4$ &$ [[244, 2 , d_e\geq48]]$&    $ \textbf{[[244, 2, 47]]}$ &    $ \textbf{[[244, 2, 48]]}$  \\

 $9\times[[28,2,10;0]]$  &  $[[9,1 ,5]]_4$ &$ [[252, 2 , d_e\geq50]]$&     $ \textbf{[[252, 2, 49]]}$  &     $ \textbf{[[252, 2, 50]]}$   \\

\makecell[l]{$8\times[[28,2,10;0]]$+$[[29,2,10;0]]$} &  $[[9,1 ,5]]_4$ &$ [[253, 2 , d_e\geq50]]$&    $ \textbf{[[253, 2, 49]]}$ &    $ \textbf{[[253, 1, 50]]}$  \\

\makecell[l]{$7\times[[28,2,10;0]]$+$2\times[[29,2,10;0]]$} &  $[[9,1 ,5]]_4$ &$ [[254, 2 , d_e\geq50]]$&    $ \textbf{[[254, 2, 49]]}$ &    $ \textbf{[[254, 2, 50]]}$  \\
\hline
\end{tabular}
\end{center}
\label{tables2}
\end{table*}
%In order to compare with standard QECCs fairly, the net rate of EAQECCs is. On the other hand,     it is it is very difficult
% to preserve many  noiseless ebits, the number of ebits consumed  is as fewer as possible.
\newpage
\begin{table*}
\tiny
\caption{\scriptsize EACQCs with better parameters  than  the best known QECCs  and EAQECCs   of the same length and net transmission. {The notation of the inner codes, for example, ``$2\times[[29,1,11]]$+$ [[30,1,11]]$'',  means
 that the inner codes are a mix of two $[[29,1,11]]$ codes and a $[[30,1,11]]$ code.   EAQECCs in the last column are obtained from the best known quaternary  codes in Ref.~\cite{Grassl:codetables}, and the number of ebits is computed with MAGMA. }}
 \begin{center}
\begin{tabular}{lllll}
  \makecell[l]{Inner Codes }    & \makecell[l]{Outer  Codes}   & EACQCs  &  \makecell[l]{QECCs in Ref. \cite{Grassl:codetables}} &\makecell[l]{EAQECCs with MAGMA} \\
\hline
$3\times[[5,1,3]]$  & $[[3,2,2;1]]$ &$ [[15, 2, 6;1]]$& $ [[15, 1, 5]]$ & $ [[15, 8, 6;7]]$ \\
$3\times[[17,1,7]]$  &  $[[3,2,2;1]]$ &$ [[51, 2, 14;1]]$&$ [[51, 1, 13]]$ &$ [[51, 2, 14;1]]$\\
$3\times[[25,1,9]]$  &  $[[3,2,2;1]]$ &$ [[75, 2, 18;1]]$&$ [[75, 1, 17]]$ &$ [[75, 37, 18;36]]$ \\
$3\times[[29,1,11]]$  &  $[[3,2,2;1]]$ &$ [[87, 2, 22;1]]$&$ [[87, 1, 21]]$  &$ [[87, 43, 21;42]]$   \\
\makecell[l]{$2\times[[29,1,11]]$+$ [[30,1,11]]$} &  $[[3,2,2;1]]$ &$ [[88, 2, 22;1]]$&$ [[88, 1, 21]]$ &$ [[88, 27, 21;27]]$ \\
\makecell[l]{$2\times[[29,1,11]]$+$ [[31,1,11]]$} &  $[[3,2,2;1]]$ &$ [[89, 2, 22;1]]$&$ [[89, 1, 21]]$ &$ [[89, 29, 21;28]]$ \\
\makecell[l]{$2\times[[29,1,11]]$+$ [[32,1,11]]$} &  $[[3,2,2;1]]$ &$ [[90, 2, 22;1]]$&$ [[90, 1, 21]]$ &$ [[90, 31, 21;31]]$ \\
\makecell[l]{$2\times[[29,1,11]]$+$ [[33,1,11]]$} &  $[[3,2,2;1]]$ &$ [[91, 2, 22;1]]$&$ [[91, 1, 21]]$ &$ [[91, 30, 21;29]]$ \\
\makecell[l]{$2\times[[29,1,11]]$+$ [[34,1,11]]$} &  $[[3,2,2;1]]$ &$ [[92, 2, 22;1]]$&$ [[92, 1, 21]]$ &$ [[92, 0, 22;0]]$ \\
\makecell[l]{$2\times[[29,1,11]]$+$ [[35,1,11]]$} &  $[[3,2,2;1]]$ &$ [[93, 2, 22;1]]$&$ [[93, 1, 21]]$ &$ [[93, 32, 21;31]]$ \\
\makecell[l]{$2\times[[29,1,11]]$+$ [[36,1,11]]$} &  $[[3,2,2;1]]$ &$ [[94, 2, 22;1]]$&$ [[94, 1, 21]]$ &$ [[94, 2, 22;2]]$ \\
\makecell[l]{$2\times[[29,1,11]]$+$ [[37,1,11]]$} &  $[[3,2,2;1]]$ &$ [[95, 2, 22;1]]$&$ [[95, 1, 21]]$ &$ [[95, 2, 22;1]]$ \\
%\makecell[l]{$[[29,1,11]]$+\\$ [[30,1,11]]$} &  $[[3,2,2;1]]$ &$ [[88, 2, 22;1]]$&$ [[96, 1, 21]]$ &$ [[97, 27, 21;27]]$ \\ \hline
\makecell[l]{$2\times[[29,1,11]]$+$ [[39,1,11]]$} &  $[[3,2,2;1]]$ &$ [[97, 2, 22;1]]$&$ [[97, 1, 21]]$ &$ [[97, 45, 22;44]]$ \\
\makecell[l]{$2\times[[29,1,11]]$+$ [[40,1,11]]$} &  $[[3,2,2;1]]$ &$ [[98, 2, 22;1]]$&$ [[98, 1, 21]]$ &$ [[98, 0, 22;0]]$ \\
\makecell[l]{$2\times[[29,1,11]]$+$ [[41,1,11]]$} &  $[[3,2,2;1]]$ &$ [[99, 2, 22;1]]$&$ [[99, 1, 21]]$ &$ [[99, 32, 21;31]]$ \\
\makecell[l]{$2\times[[30,1,11]]$+$ [[40,1,11]]$} &  $[[3,2,2;1]]$ &$ [[100, 2, 22;1]]$&$ [[100, 1, 21]]$ &$ [[100, 0, 22;0]]$ \\
\makecell[l]{$2\times[[30,1,11]]$+$ [[41,1,11]]$} &  $[[3,2,2;1]]$ &$ [[101, 2, 22;1]]$&$ [[101, 1, 21]]$ &$ [[101, 32, 21;31]]$ \\
\makecell[l]{$2\times[[31,1,11]]$+$ [[40,1,11]]$} &  $[[3,2,2;1]]$ &$ [[102, 2, 22;1]]$&$ [[102, 1, 21]]$ &$ [[102, 0, 22;0]]$ \\
\makecell[l]{$2\times[[31,1,11]]$+$ [[41,1,11]]$} &  $[[3,2,2;1]]$ &$ [[103, 2, 22;1]]$&$ [[103, 1, 21]]$ &$ [[103, 4, 22;3]]$ \\
\hline
\end{tabular}
 \end{center}
 \label{tables3}
\end{table*}

\begin{table*}\tiny
 \caption{\scriptsize EACQCs  with better parameters  than the best known     QECCs and EAQECCs   of the same length and net transmission. {The outer codes are EAQMDS codes in Ref.~\cite{fan2016constructions}. EAQECCs in the last column are obtained from the best known quaternary  codes in Ref.~\cite{Grassl:codetables}, and the number of ebits is computed with MAGMA.  For the QECCs   {missing explicit constructions} in  Ref.~\cite{Grassl:codetables}, their  parameters are in \textbf{bold} type. Since we cannot compute the number of ebits for the best known quaternary codes  {missing explicit constructions} in Ref.~\cite{Grassl:codetables}, we leave some EAQECCs empty in the last column.}}
\begin{center}
\begin{tabular}{lllll}
 Inner Codes     &  Outer Codes   & EACQCs &   QECCs in Ref.~\cite{Grassl:codetables}&EAQECCs with MAGMA \\
\hline
$17\times[[4,2,2]]$  &  $[[17,4,8;1]]_4$ &$ [[68, 8, 16;2]]$&$ [[68, 6, 14]]$ &$ [[68, 19, 15;13]]$ \\

 {$16\times[[4,2,2]]$+$[[5,2,2]]$ } &  $[[17,4,8;1]]_4$ & {$ [[69, 8, 16;2]]$ }& $ [[69, 6, 14]] $& $ [[69, 19, 15;14]] $  \\

 {$15\times[[4,2,2]]$+$2\times[[5,2,2]]$ } &  $[[17,4,8;1]]_4$ & {$ [[70, 8, 16;2]]$ }& $ [[70, 6, 14]] $& $ [[70, 20, 15;14]] $  \\

 {$14\times[[4,2,2]]$+$3\times[[5,2,2]]$ } &  $[[17,4,8;1]]_4$ & {$ [[71, 8, 16;2]]$ }& $ [[71, 6, 14]] $& $ [[71, 32, 16;27]] $  \\

 {$13\times[[4,2,2]]$+$4\times[[5,2,2]]$ } &  $[[17,4,8;1]]_4$ & {$ [[72, 8, 16;2]]$ }& $ [[72, 6, 14]] $& $ [[72, 32, 16;26]] $  \\

  {$12\times[[4,2,2]]$+$5\times[[5,2,2]]$ } &  $[[17,4,8;1]]_4$ & {$ [[73, 8, 16;2]]$ }& $ [[73, 6, 14]] $& $ [[73, 25, 16;20]] $  \\

  {$11\times[[4,2,2]]$+$6\times[[5,2,2]]$ } &  $[[17,4,8;1]]_4$ & {$ [[74, 8, 16;2]]$ }& $ [[74, 6, 15]] $& $ [[74, 32, 16;27]] $  \\
$17\times[[10,2,4;0]]$   &  $[[17,4,8;1]]_4$ &$ [[170, 8, 32;2]]$&   $ \textbf{[[170, 6, 32]]}$ &-----\\
  \makecell[l]{$(17-t)\times[[10,2,4]]$+\\$t\times[[11,2,4]]$, $1\leq t\leq 2$} &  $[[17,4,8;1]]_4$ & \makecell[l]{$ [[170+t, 8, 32;2]]$,\\ $1\leq t\leq 2$}& \makecell[l]{$ \textbf{[[170+t, 6, 32 ]]}$,\\ $1\leq t\leq 2$}&----- \\
   \makecell[l]{$(17-t)\times[[10,2,4]]$+\\$t\times[[8,2,4;2]]$, $1\leq t\leq 2$} &  $[[17,4,8;1]]_4$ & \makecell[l]{$ [[170-2t, 8, 32;2+2t]]$, \\$1\leq t\leq 2$}& \makecell[l]{$ \textbf{[[170-2i, 6-2t, 32 ]]}$, \\$1\leq t\leq 2$} &-----\\
   \hline
\end{tabular}
\end{center}
\label{tables4}
\end{table*}
\clearpage
\section{The Entanglement Fidelity of EACQCs}
\label{App3}
 We use the entanglement  fidelity (EF) as the figure of merit for the performance of different quantum codes \cite{PhysRevA.56.131}.
%EF is a practical    measure  of how well the quantum operation $\Upsilon$ stores the entanglement between a quantum state $|\varphi\rangle$ and its reference system $\mathbb{H}_R$ \cite{PhysRevA.56.131}.
Let  $\mathbb{H}$ be a finite dimensional Hilbert space.
Let $|\varphi\rangle\in\mathbb{H}\otimes \mathbb{H}_R $ be a purification of  a mixed state $ \rho=\textrm{Tr}_{\mathbb{H}_R}|\varphi\rangle\langle \varphi|$, where $\mathbb{H}_R $ is a reference system.  The entanglement fidelity  of $ \rho$ and $\Upsilon$ is defined as
\begin{equation}
F_e(\rho,\Upsilon)=\langle \varphi|(I_{\mathbb{H}_R}\otimes \Upsilon)(|\varphi\rangle\langle \varphi|)|\varphi\rangle,
\end{equation}
where $I_{\mathbb{H}_R}$ is the identity operator and $\Upsilon$ is a quantum map. Suppose that we can write the quantum
map $\Upsilon$ in terms of Kraus  operators, i.e., $\Upsilon (\rho)=\sum_iA_i\rho A_i^\dag$, where $A_i^\dag A_i=I$, then the EF can be expressed by  a useful computational
formula as follows:
 \begin{equation}
F_e(\rho,\Upsilon)= \sum_i|\textrm{Tr}(\rho A_i)|^2.
\end{equation}

 As a special case of   EF, the   \emph{channel fidelity}     \cite{PhysRevA.56.131,lai2012entanglement}  is defined as
\begin{equation}
F_c(\rho)= \frac{1}{(\dim \mathbb{H})^2}\sum_i|\textrm{Tr}(  A_i)|^2.
\end{equation}
For the depolarizing channel,  the channel fidelity is equal to the  probability of correctable errors after quantum error correction (QEC) and recovery \cite{lai2012entanglement}.
Denote the encoding and recovery operations in a QEC process by $\mathcal{E}$ and $\mathcal{R}$, respectively. For  a single qubit state $\rho_0$,    the encoding $\mathcal{E}$ takes it to an encoded state $\rho(0)$, i.e., $\mathcal{E}:\rho_0\rightarrow \rho(0)$. The state $\rho(0)$ is sent through the quantum channel $\Upsilon$ and the received state is
$\rho(t)=\Upsilon(\rho(0))$. At the receiver, $\rho(t) $ is recovered by the decoding and recovery operation $\mathcal{R}$ and the final logical state is   $\rho_f=\mathcal{R}(\rho(t))$. For the entire  QEC process, the operation
\begin{equation}
 \mathcal{W}=\mathcal{R}\circ \Upsilon   \circ \mathcal{E}: \rho_0\rightarrow \rho_f
\end{equation}
is called the effective channel, which characterizes the effective dynamics of the encoded information arising from the physical dynamics of $\Upsilon$ \cite{rahn2002exact}.

Suppose that we can write the quantum  channel $\Upsilon$ on $N$ physical qubits as independent noise $\Upsilon^{(1)}$ on each single qubit, i.e.,
\begin{equation}
\Upsilon=\Upsilon^{(1)}\otimes\cdots\otimes\Upsilon^{(1)}=\Upsilon^{(1)\otimes N}.
\end{equation}
Let $Q $ be an $N$ physical qubit QECC with encoding
operation and recovery operation given by $ \mathcal{E}$ and $ \mathcal{R}$, respectively. Define the following \emph{coding map}:
\begin{equation}\label{codingmap}
\Omega^C: \Upsilon^{(1)}_0 \rightarrow \Upsilon^{(1)}_f=\mathcal{R} \circ     \Upsilon^{(1)\otimes N}_0 \circ \mathcal{E},
\end{equation}
 which takes
the single qubit noise $\Upsilon^{(1)}_0$ to the effective channel $\Upsilon^{(1)}_f$.
 If $\Upsilon^{(1)}_0$ ia a depolarizing channel and $Q$ is a stabilizer code, then $\Omega^C$ takes $\Upsilon^{(1)}_0$ to a depolarizing channel $\Upsilon^{(1)}_f$.

The channel fidelity of the $[[5,1,3]]$ stabilizer code over the depolarizing channel    \cite{lai2012entanglement,rahn2002exact} is given by
$
F_c^{[[5,1,3]]}=1-{45}p^2/8+{75}p^3/{8}-{45}p^4/{8}+{9}p^5/{8},
$
where $p$ is the depolarizing probability.  From \cite{rahn2002exact}, we known that the $[[5,1,3]]$ stabilizer code takes a depolarizing channel $\Upsilon^{(1)}_0 $ to a depolarizing channel $\Upsilon^{(1)}_f$ under the coding map  $\Omega^C$. The  error  probability of  $\Upsilon^{(1)}_f$ is equal to  $1-F_c^{[[5,1,3]]}$. According to \cite{cafaro2010quantum},  the entanglement fidelity of the $[[5,1,3]]$ code   is given by
$
F_e^{[[5,1,3]]}=1   -   10p^2 + 20p^3 - 15p^4+4p^5.
$
 Then we can derive the entanglement  fidelity of the $\mathcal{Q}_C=[[25,1,9]]$ CQC  as follows:
\begin{equation}
F_e^{[[25,1,9]]}=1   -   10p_C^2 + 20p_C^3 - 15p_C^4+4p_C^5,
\end{equation}
where $p_C= 1-F_c^{[[5,1,3]]}$.

Denote  the communication channel between Alice and Bob by $N_A$. Denote  the channel model of storing  Bob's ebits by $N_B$. Suppose that both $N_A$ and $N_B$ are
depolarizing channels, i.e.,
\begin{equation}
   \phi\mapsto \left(1-\frac{3p}{4}\right) \phi+\frac{p}{4}X\phi {X}+\frac{p}{4}Y\phi  {Y}+ \frac{p}{4}Z\phi {Z},
\end{equation}
where $\phi$ is a single quantum state, $0\leq p\leq 1$ is the depolarizing probability, and $\{X,Y,Z\}$ are the Pauli operators.
   For a $\mathscr{Q}_e=[[n,k,d;c]]$ EAQECC with $n$ physical qubits and $c$ ebits, denote
\begin{eqnarray}
\mu_i&=&\left(1-\frac{3p_a}{4} \right)^{n-i}\left(\frac{p_a}{4}\right)^i,\ \textrm{and}\\
\nu_j&=&\left(1-\frac{3p_b}{4}\right)^{c-j}\left(\frac{p_b}{4}\right)^j,
 \end{eqnarray}
where $0\leq i\leq n$, $0\leq j\leq c$, and, $p_a$ and $p_b$ are the depolarizing probabilities
of $N_A$ and $N_B$, respectively.

It is known that the channel fidelity of Bowen's $[[3,1,3;2]]$ EAQECC is given by
\begin{eqnarray}
F_c^{[[3,1,3;2]]}=\mu_0\nu_0 + 9\mu_1\nu_0 + 6\mu_3\nu_0 + 6\mu_0\nu_1 +  36\mu_2\nu_1 + 54\mu_3\nu_1
+18\mu_1\nu_2 + 81\mu_2\nu_2 + 45\mu_3\nu_2.
\end{eqnarray}
We  use Bowen's $[[3,1,3;2]]$ EAQECC as the inner code, and use  the $[[5,1,3]]$ stabilizer code as the outer code to construct a  $\mathscr{Q}_B=[[15,1,9;10]]^B$ EACQC. The inner $[[3,1,3;2]]$ EAQECC  takes a depolarizing channel to a  depolarizing channel under the coding map
$\Omega^C$  in  Eq.~[\ref{codingmap}]. Then the  entanglement fidelity of $\mathscr{Q}_B$ is given by
\begin{equation}
F_e^{[[15,1,9;10]]^B}=  1 - 10p_B^2 + 20p_B^3 - 15p_B^4+4p_B^5,
\end{equation}
where $p_B= 1-F_c^{[[3,1,3;2]]} .$

Alternately, we  use the $[[3,1,3;2]]$ EA repetition code as the inner code, and use  the $[[5,1,3]]$ stabilizer code as the outer code to construct another $\mathscr{Q}_R=[[15,1,9;10]]^R$ EACQC. It is known that the channel fidelity of the $[[3,1,3;2]]$ EA repetition code   is given  by
\begin{eqnarray}
F_c^{{[[3,1,3;2]]}'}=\mu_0\nu_0 + 9\mu_1\nu_0 + 6\mu_2\nu_0 + 18\mu_1\nu_1 +  38\mu_2\nu_1 + 40\mu_3\nu_1
+18\mu_1\nu_2 + 55\mu_2\nu_2 + 71\mu_3\nu_2.
\end{eqnarray}
Then the entanglement  fidelity of $\mathscr{Q}_R$ is given by
\begin{equation}
F_e^{{[[15,1,9;10]]}^R}=1 - 10p_R^2 + 20p_R^3 - 15p_R^4+4p_R^5,
\end{equation}
where $p_R= 1-F_c^{{[[3,1,3;2]]}'} $.

\end{document}